\newtheorem{thm1}{\bf Theorem}
\newtheorem{prop1}{\bf Proposition}
\newtheorem{lem1}{\bf Lemma}
\newtheorem{assmpt1}{\bf Assumption}
\newtheorem{defn17}{\bf Definition}
\newtheorem{rem1}{\bf Remark}
\newtheorem{cor1}{\bf Corollary}
\newenvironment{definition}{\begin{defn17}}{\hfill$\Diamond$\end{defn17}}
\newenvironment{remark}{\begin{rem1}}{\hfill$\Diamond$\end{rem1}}
\newenvironment{theorem}{\begin{thm1}}{\hfill$\Diamond$\end{thm1}}
\begin{document}


\title{K-SMPC: Koopman Operator-Based Stochastic Model Predictive Control for Enhanced Lateral Control of Autonomous Vehicles
}

\author{Jin Sung Kim,~\IEEEmembership{Graduate Student Member,~IEEE}, Ying Shuai Quan,~\IEEEmembership{Graduate Student Member,~IEEE,} \\
and Chung~Choo~Chung,~\IEEEmembership{Member,~IEEE}

\thanks{Manuscript received XXX, XX, 2023; revised XXX, XX, 2023; accepted XXX, XX, 2023. This work was supported by the National Research Foundation of Korea (NRF) grant funded by the Ministry of Science and ICT (MSIT) (No.2021R1A2C2009908, Data-Driven Optimized Autonomous Driving Technology Using Open Set Classification Method). \textit{(Corresponding author: Chung Choo Chung.)}}
\thanks{Jin Sung Kim, Ying Shuai Quan are with the Dept. of Electrical Engineering, Hanyang University, Seoul 04763, Korea.
(e-mail: {\tt \{jskim06, ysquan\}@hanyang.ac.kr})}
\thanks{Chung Choo Chung is with the Div. of Electrical and Biomedical Engineering, Hanyang University, Seoul 04763, Korea. (+82-2-2220-1724, e-mail: {\tt cchung@hanyang.ac.kr})}
}

\markboth{IEEE TRANSACTIONS ON INTELLIGENT VEHICLES, VOL. xx, NO. xx, xxx 2023}%
{Shell \MakeLowercase{\textit{et al.}}: A Sample Article Using IEEEtran.cls for IEEE Journals}


\maketitle

\begin{abstract}
This paper proposes Koopman operator-based Stochastic Model Predictive Control (K-SMPC) for enhanced lateral control of autonomous vehicles. The Koopman operator is a linear map representing the nonlinear dynamics in an infinite-dimensional space.
Thus, we use the Koopman operator to represent the nonlinear dynamics of a vehicle in dynamic lane-keeping situations.
The Extended Dynamic Mode Decomposition (EDMD) method is adopted to approximate the Koopman operator in a finite-dimensional space for practical implementation.
We consider the modeling error of the approximated Koopman operator in the EDMD method. Then, we design K-SMPC to tackle the Koopman modeling error, where the error is handled as a probabilistic signal.
The recursive feasibility of the proposed method is investigated with an explicit first-step state constraint by computing the robust control invariant set.
A high-fidelity vehicle simulator, i.e., CarSim, is used to validate the proposed method with a comparative study.
From the results, it is confirmed that the proposed method outperforms other methods in tracking performance. Furthermore, it is observed that the proposed method satisfies the given constraints and is recursively feasible.
\end{abstract}

\begin{IEEEkeywords}
Koopman operator, autonomous vehicles, stochastic model predictive control, data-driven control
\end{IEEEkeywords}

\section*{NOMENCLATURE}
\begin{itemize}
\item \{${XYZ}$\} : Global coordinate frame
\item \{${xyz}$\} : Local coordinate frame
\item $C_{\alpha i}$ : Cornering stiffness of tire, $i \in \{f, r\}$
\item $V_x$ : Longitudinal speed
\item $V_y$ : Lateral speed
\item $m$ : Total mass of vehicle
\item $l_i$ : Distance between front (rear) tire and center of gravity (CG) , $i \in \{f, r\}$
\item $I_z$ : Yaw moment of inertia of vehicle
\item $a_y$ : Lateral acceleration in $\{xyz\}$
\item $L$ : Look-ahead distance
\item $e_y=y-y_{des}$ : Lateral position error w.r.t. lane
\item $e_{yL}$ : Lateral position error on look-ahead point w.r.t. lane
\item $\psi$ : Yaw angle of vehicle in global coordinate
\item $e_{\psi}=\psi_{des}-\psi$ : Heading angle error in local coordinate w.r.t. lane
\item $\delta$ : Steering angle
\item $R$: Turning radius
\item $\beta$: Vehicle side slip angle at CG
\end{itemize}

\section{Introduction}

Autonomous driving vehicles provide advanced driver assistance functions to relieve humans from monotonous long drives and can significantly decrease traffic congestion and accidents. A typical autonomous driving setup comprises essential components such as perception, localization, decision-making, trajectory planning, and control. During trajectory planning and control, knowledge of vehicle dynamics is necessary to execute accurate and safe maneuvers, particularly in complex and unpredictable road environments.
Thus, it is essential to have a lateral vehicle dynamic model to design a lateral controller. Lateral control of autonomous driving has gained much attention in many areas, such as automated parking control~\cite{seo2021lpv}, lateral control on curved roads~\cite{quan2022linear, choi2021horizon}, and automated lane change systems~\cite{kim2020lane}.
The bicycle lateral dynamic motion model has been widely used to develop lateral control~\cite{rajamani2011vehicle}. In the dynamic model, lateral tire force and acceleration are used to capture the dynamic motion of a vehicle for high-speed driving to represent accurate vehicle behavior.
Although many studies have used the bicycle lateral dynamic model, i.e., linear dynamic model, for practical applications under certain conditions, such as a small tire slip angle with a given vehicle speed, the nonlinearity of the vehicle dynamics cannot be ignored because the tire model is highly nonlinear due to the vertical load transfer~\cite{kim2019vehicular}. Moreover, the vehicle speed is no longer constant in dynamic driving.
Therefore, obtaining a model that captures the full vehicle dynamics for various driving conditions is necessary. 

Numerous studies have attempted to identify the unknown nonlinear dynamics in different research fields~\cite{brunton2022data}. Recently, a modeling approach based on many datasets has received significant attention for complex systems whose dynamics are difficult to capture~\cite{brunton2016discovering}.
In this context, the Koopman operator has been used in model identification of complex dynamics in recent years. The Koopman operator is a linear map for representing nonlinear systems in an infinite-dimensional space~\cite{koopman1931hamiltonian}.
One of the primary benefits of using the Koopman operator is that the linear model can express the underlying nonlinear behavior. As a result, a linear control design method can be applied to a general nonlinear dynamic system.

In recent years, the Koopman operator-based modeling and control approach has been widely adopted in automated driving because vehicles have highly nonlinear behaviors.
In~\cite{cibulka2019data, xiao2022deep, vsvec2021model, cibulka2020model, vsvec2021predictive}, the authors proposed model identification of nonlinear vehicle dynamics to control vehicle lateral and/or longitudinal velocity.
In~\cite{yu2022autonomous} and~\cite{xiao2023ddk}, the authors considered the global position control of the vehicle. Position control is essential for controlling vehicles properly on roads.
In~\cite{wang2021deep}, the authors considered the local position with respect to the given trajectory. Then, a mini-sized car was used to show the effectiveness of the proposed system.
For practical implementation of the Koopman operator, the papers mentioned above used Extended Dynamic Mode Decomposition (EDMD) or neural networks to approximate the Koopman operator in a finite-dimensional space.
Unfortunately, the approximated Koopman operator causes approximation uncertainty, which results in the presence of modeling errors because there is a residual term in the optimization problem of approximation of the Koopman operator~\cite{korda2018linear, zhang2022robust, kim2023uncertainty}.
Therefore, the model mismatch can not be negligible in using the Koopman operator, even though the Koopman operator has a powerful linear property representing the nonlinear dynamics.
To tackle this problem, \cite{son2020handling} proposes a method of handling the approximation error with an estimator.
In~\cite{zhang2022robust} and~\cite{wang2022data}, the authors design Robust Model Predictive Control (RMPC) for the nonlinear system with constraints satisfaction under uncertainties. However, it is challenging to obtain a robust positively invariant set against the uncertainties of the approximated Koopman operator. Moreover, an RMPC scheme is conservative since the worst-case of uncertainties is considered~\cite{farina2016stochastic, mesbah2016stochastic}.

In this context, this paper proposes a Koopman operator-based stochastic MPC (K-SMPC) for enhanced lateral control of autonomous vehicles. The EDMD method is adopted to obtain the approximated Koopman operator in a finite-dimensional space for practical use of the Koopman operator.
Our work considers the modeling error due to the EDMD-based finite-dimensional approximation of the Koopman operator. We handle the modeling error as a probabilistic signal and design the SMPC to tackle the error. As a result, the proposed method is less conservative than the RMPC. To our knowledge, this paper is the first research in which the SMPC is used to resolve the modeling error of the approximated Koopman operator in the LKS application.
All constraints are satisfied under the Koopman modeling error with recursive feasibility in the proposed method.
A high-fidelity vehicle simulator, CarSim, is used to validate the proposed method. The simulation results confirmed that the proposed method always satisfies the constraints and is recursively feasible.
Moreover, a comparative study shows that the proposed method outperforms other methods: the linear vehicle model-based SMPC (L-SMPC) and the Koopman-based Linear Quadratic Regulator (K-LQ)~\cite{kim2023koopman}.
The contributions of the paper are three-fold:
\begin{itemize}
\item
We compute the Koopman-based vehicle model for the Lane Keeping System (LKS). The vehicle model has highly nonlinear dynamic motion in dynamic driving, such as varying vehicle speed or cornering stiffness. Thus, we reformulate the Koopman-based vehicle model from~\cite{kim2023koopman} to effectively capture the vehicle nonlinear dynamics for the LKS in various driving situations.

\item
Uncertainty of the approximated Koopman operator is considered and handled by the probabilistic constraints of K-SMPC. Since the approximated Koopman model may fail to represent the system accurately, we designed K-SMPC to predict the expected state of the system. With the proposed algorithm, we generated K-SMPC resistant to error in the model identification and randomness in the dynamics.

\item
We prove the recursive feasibility of the proposed K-SMPC with an explicit first-step state constraint by computing a robust control invariant set by providing a theorem. Compared to a mixed worst-case/stochastic prediction for constraint tightening, the proposed method is less conservative but has recursive feasibility.
\end{itemize}

The rest of the paper is structured as follows: Section~\ref{sec:vehicle dynamics} investigates the vehicle nonlinear dynamics. Section~\ref{sec:koopman} introduces the background of the Koopman operator theory and its application to vehicle dynamics for the LKS. Based on the obtained Koopman operator, Section~\ref{sec:smpc} presents the SMPC design process with recursive feasibility. The simulation results are shown in Section~\ref{sec:results}, and the conclusion of the paper is described in Section~\ref{sec:conclusion}.


\section{Nonlinear Vehicle Dynamics on Roads}
\label{sec:vehicle dynamics}

\begin{figure}[t]
\includegraphics[width=\columnwidth]{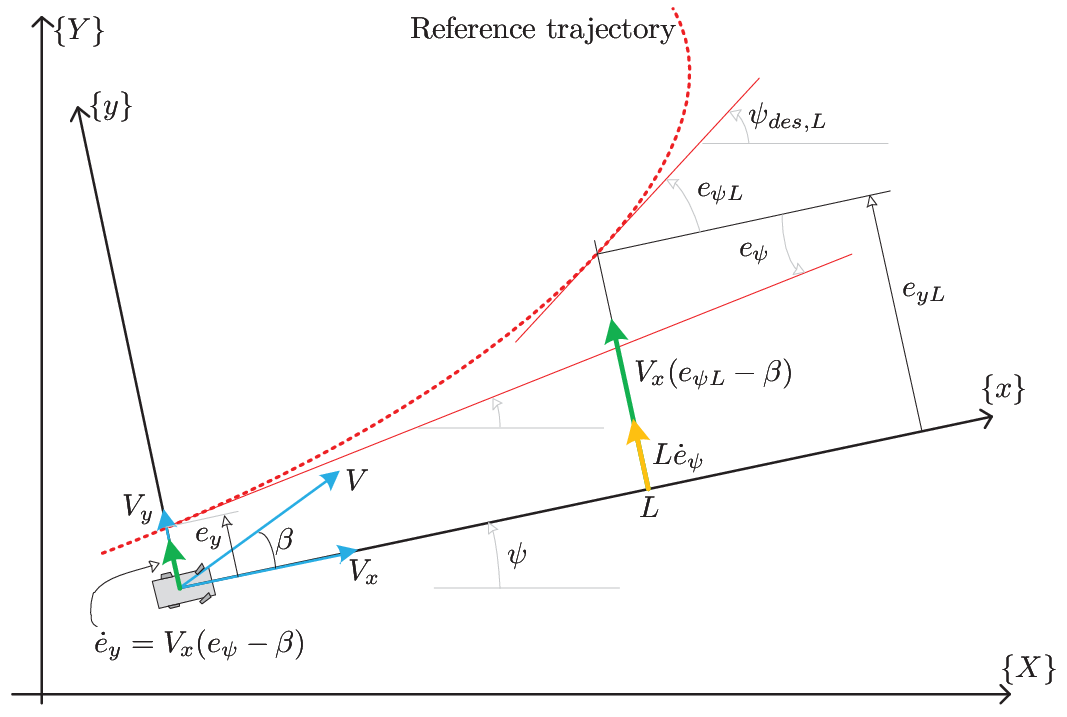}
\caption{Look-ahead lateral dynamic model~\cite{lee2020autonomous}}
\label{fig:look_ahead}
\end{figure}
\subsection{Clothoid Road Lane Model}
\label{subsec:road lane}
We introduce a road lane where a vehicle may run to be represented by a cubic polynomial curve. The cubic polynomial curve is defined by the clothoid curve, where the curvature of the curve is continuous and slowly varying~\cite{lee2020autonomous, choi2021horizon}. To consider the clothoid constraint with slowly varying curvature $\kappa$, it can be defined as
\begin{equation}
\kappa(s) = 2 C_2  + 6 C_3 s,
\label{eq:curvature}
\end{equation}
where $s$ denotes the arc length, $2 C_2$ denotes the curvature at $s=0$, and $6 C_3$ denotes the curvature rate. For a small curvature, the arc length $s$ can be approximated by the longitudinal distance $x$~\cite{rajamani2011vehicle}. Then, integrating~\eqref{eq:curvature} twice leads to a clothoid cubic polynomial road model such that
\begin{equation}
f(x) = C_0 + C_1 x + C_2 x^2 + C_3 x^3,
\label{eq:road model}
\end{equation}
where $C_0$ denotes the lateral offset, and $C_1$ denotes the heading angle error. In general, the lane curve model is widely applied with the assumption of a plain road in a camera-based lane recognition (see~\cite{dickmanns1987curvature} and references therein). It is well known that the road model is obtained by a camera sensor. Moreover, from~\eqref{eq:curvature}, $C_2$ and $C_3$ are the shape of the road, which is not dependent on the vehicle motion. On the other hand, $C_0$ and $C_1$ in~\eqref{eq:road model} are dependent on the vehicle motion since they show the relationship between the vehicle and the road lane curve.

This paper considers the look-ahead distance to mimic human driving behavior~\cite{choi2020robust}. By using~\eqref{eq:road model}, the heading angle error and a lateral offset at look-ahead distance $L$ can be computed as
\begin{equation}
\begin{split}
f(L) &= e_{yL}  = C_0 + C_1 L + C_2 L^2 + C_3 L^3,\\
f'(L) &= e_{\psi L}  = C_1 + 2 C_2 L + 3 C_3 L^2,
\end{split}
\end{equation}
as shown in Fig.~\ref{fig:look_ahead}.


\begin{figure}[t]
\includegraphics[width=\columnwidth]{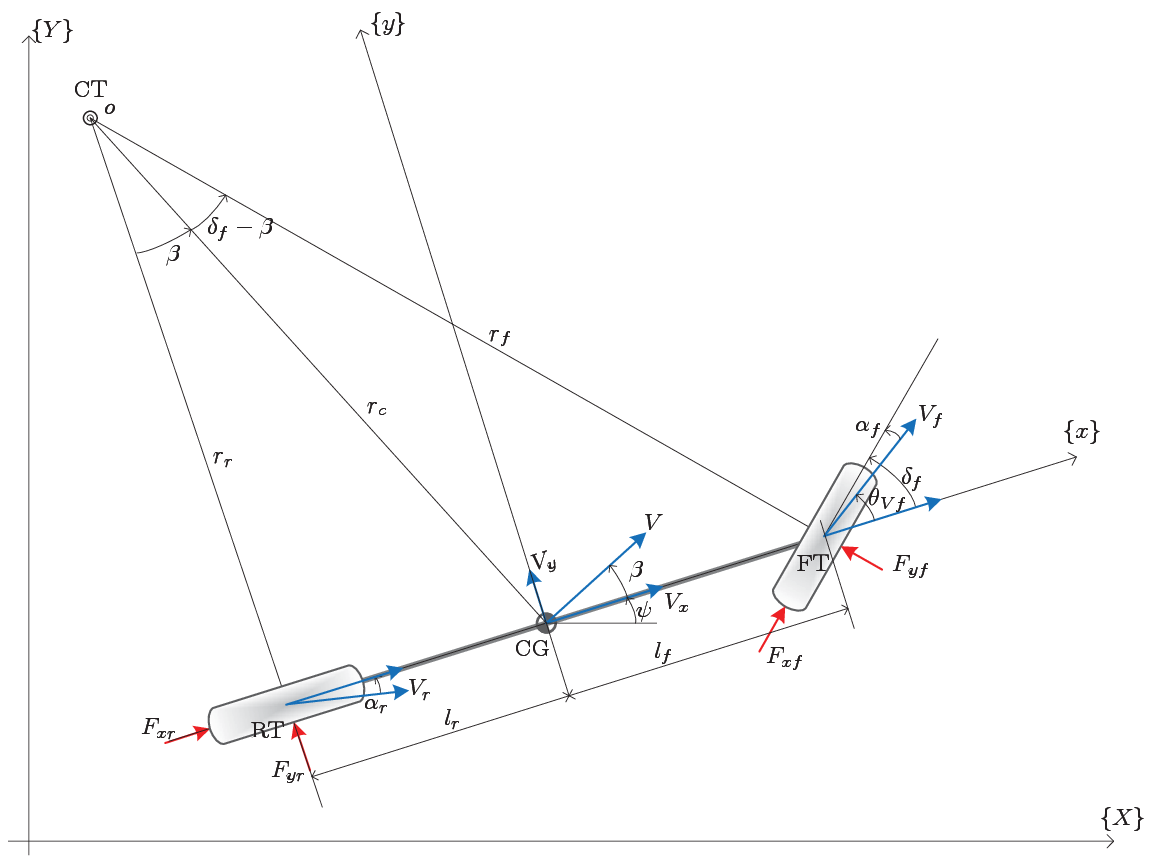}
\caption{Single track bicycle model}
\label{fig:single_track}
\end{figure}
\subsection{Lateral Vehicle Motion Model}

In this subsection, we derive the lateral vehicle motion model as the nonlinear dynamics. To begin with, consider Newton's second law in the lateral direction of the vehicle such that
\begin{equation}
m a_y = F_{yf} + F_{yr}
\label{eq:lateral force}
\end{equation}
where $a_y$ is the lateral acceleration of the vehicle at the
center of gravity and $F_{yf}$ and $F_{yr}$ are the lateral tire forces at the front and rear wheels, respectively. The lateral tire force can be represented as a nonlinear function with respect to the tire slip angle $\alpha_f$, $\alpha_r$, and the vehicle state, which is given by
\begin{equation}
\begin{split}
F_{yf}    &= 2 C_{\alpha f}( \alpha_f) \cdot \Big(  \delta - \arctan(\frac{V_y + l_f \dot{\psi}}{V_x})  \Big), \\
F_{yr}    &= 2 C_{\alpha r}( \alpha_r) \cdot \Big( - \arctan(\frac{V_y - l_r \dot{\psi}}{V_x})  \Big),
\end{split}
\label{eq:tire force}
\end{equation}
where $C_{\alpha f}$ is the cornering stiffness which is a function of the tire slip angle.
There are two terms contributing to the lateral acceleration: the translational acceleration $\ddot{y}$, and the centripetal acceleration $V_x \dot{\psi}$ such that
\begin{equation}
a_y         = \ddot{y} + V_x \dot{\psi}.
\label{eq:lateral accel}
\end{equation}
Substituting~\eqref{eq:lateral force} into~\eqref{eq:lateral accel} leads to
\begin{equation}
\ddot{y}    = -V_x \dot{\psi} + \frac{F_{yf}+F_{yr}}{m}.
\label{eq:ddot y}
\end{equation}
In addition, the yaw dynamics of the vehicle along the z-axis are represented by
\begin{equation}
I_z \ddot{\psi} = l_f F_{yf} - l_r F_{yr},
\label{eq:yaw dynamics}
\end{equation}
where $l_f$ and $l_r$ are the distances of the front wheel and the rear wheel from the center of gravity, respectively.

Let us obtain the heading angle error rate
\begin{equation}
\dot{e}_{\psi}  = \dot{\psi}_{des} - \dot{\psi},
\label{eq:epsi dynamics}
\end{equation}
and the lateral position error rate
\begin{equation}
\dot{e}_{y} = \dot{y} - \dot{y}_{des} = \dot{y} + V_x e_{\psi}.
\label{eq:ey dynamics}
\end{equation}
Then, we can obtain
\begin{equation}
\begin{split}
\ddot{e}_y &= \ddot{y} - \ddot{y}_{des} = \ddot{y} + V_x \dot{e}_{\psi}\\
           &= -V_x \dot{\psi} + \frac{F_{yf}+F_{yr}}{m} + V_x \dot{e}_{\psi}.
\end{split}
\end{equation}
In order to mimic the general behavior of expert drivers, it is necessary to consider error at the look-ahead distance~\cite{lee2016robust}, as shown in Fig.~\ref{fig:look_ahead}. Then, the lateral offset error at the look-ahead distance is given by
\begin{equation}
\begin{split}
\dot{e}_{yL} &= V_x (e_{\psi L} - \beta ) + L \dot{e}_{\psi} \\
             &= \dot{e}_y - L \dot{\psi} + V_x (e_{\psi L} - e_{\psi} ) + L \dot{\psi}_{des}.
\end{split}
\label{eq:eyL dynamics}
\end{equation}
Now, let us define the state, the input, and the external signal of the vehicle dynamics~\cite{lee2016robust, kim2023koopman, kim2022data}
\begin{equation}
\begin{split}
\textbf{x} &=
\begin{bmatrix} e_{y} & e_{yL} & \dot{e}_y & e_{\psi} & \dot{\psi} & a_y & V_y
\end{bmatrix}^T, \\
\textbf{u} &= \delta, \\
\varphi &=
\begin{bmatrix}
 V_x & C_2 & C_3
\end{bmatrix}^T,
\end{split}
\end{equation}
where $\textbf{x} \in \mathds{R}^n$, $\textbf{u} \in \mathds{R}^m$, and $\varphi \in \mathds{R}^d$.
Then, we can describe the nonlinear vehicle dynamics such that
\begin{equation}
\dot{\textbf{x}} = f_v (\textbf{x}, \textbf{u}, \varphi).
\label{eq:continuous nonlinear vehicle model}
\end{equation}
Since the lateral tire force~\eqref{eq:tire force} is highly nonlinear with respect to the tire slip angle and vehicle motion, \eqref{eq:continuous nonlinear vehicle model} can be represented as a nonlinear structure.
%
%
%
Then, discretizing \eqref{eq:continuous nonlinear vehicle model} leads to a discrete-time vehicle nonlinear model given by
\begin{equation}
\textbf{x}_{k+1} = f_d ( \textbf{x}_k, \textbf{u}_k, \varphi_k  ).
\label{eq:discrete nonlinear vehicle model}
\end{equation}
As reported in~\cite{rajamani2011vehicle}, the vehicle dynamics have strong couplings in lateral and longitudinal directions due to the tire characteristics. Thus, it can be challenging to identify the cornering stiffness parameters. Here, this paper tackles this nonlinearity of the vehicle dynamics by taking advantage of an emerging technique in the field of data-driven modeling, i.e., the Koopman operator theory.
It is not necessary to have any prior knowledge of the internal parameters of the vehicle. Only the collected dataset of the system state and input are required to obtain the Koopman operator. Using the property of the Koopman operator, we construct a linear vehicle dynamic model precisely representing~\eqref{eq:discrete nonlinear vehicle model} in a lifted space. We will discuss the detailed design process in the following sections.


\section{Koopman Operator}
\label{sec:koopman}

\subsection{Preliminary}
The Koopman operator was initially proposed to capture the nonlinear autonomous dynamics in an infinite-dimensional space~\cite{koopman1931hamiltonian}. Thus, let us consider the discrete-time nonlinear autonomous dynamics such that
\begin{equation}
\boldsymbol{\eta}_{k+1} = f_a ( \boldsymbol{\eta}_k),
\label{eq:autonomous nonlinear model}
\end{equation}
where $\boldsymbol{\eta}_k \in \mathcal{N}$ is the state of the system, $f_a$ is the nonlinear map, and $k \in \mathds{Z}_+$ is the discrete-time.
Let us consider a real-valued scalar function $\pi_a: \mathcal{N} \rightarrow \mathds{R}$, which is the so-called \emph{observable}~\cite{mauroy2020koopman, brunton2022data}. Each function $\pi_a$ is an element of an infinite-dimensional function space $\mathcal{F}_a$ (i.e., $\pi_a \in \mathcal{F}_a$)~\cite{mauroy2020koopman}. Then, the Koopman theory provides an alternative representation of~\eqref{eq:autonomous nonlinear model} by a linear operator, i.e., the Koopman operator $\mathcal{K}_a: \mathcal{F}_a \rightarrow  \mathcal{F}_a$ in the space $\mathcal{F}_a$ such that
\begin{equation}
\mathcal{K}_a  \pi_a (\boldsymbol{\eta}_k ) = \pi_a (f_a (\boldsymbol{\eta}_k) )
\end{equation}
for every $\pi_a \in \mathcal{F}_a$, where the function space $\mathcal{F}_a$ is invariant under the Koopman operator~\cite{mauroy2020koopman, korda2018linear}.

There are several ways to apply the Koopman operator to controlled nonlinear systems with a slight change~\cite{proctor2018generalizing, williams2016extending, korda2018linear}. This paper adopts the data-driven method from~\cite{korda2018linear}, which is a rigorous and practical approach.
Let us consider a controlled discrete-time nonlinear system such that
\begin{equation}
\boldsymbol{\eta}_{k+1} = f (\boldsymbol{\eta}_{k}, \boldsymbol{\nu}_k),
\label{eq:controlled nonlinear model}
\end{equation}
where $\boldsymbol{\nu}_k \in \mathcal{V}$ is the system input. We can then define the extended state-space $\mathcal{N} \times \mathcal{I} (\mathcal{V})$, where $\mathcal{I} (\mathcal{V})$ is the space of all the control sequences, $\boldsymbol{\mu} := (\boldsymbol{\nu}_k)_{k=0}^{\infty}$. Using the scheme from~\cite{korda2018linear}, we can define the extended state given by
\begin{equation}
\chi = \begin{bmatrix} \boldsymbol{\eta} \\ \boldsymbol{\mu} \end{bmatrix}.
\label{eq:extended state}
\end{equation}
With the extended state~\eqref{eq:extended state},  \eqref{eq:controlled nonlinear model} can be in the form of an autonomous system such that
\begin{equation}
\chi_{k+1} = F(\chi_k) :=
\begin{bmatrix}
f(\boldsymbol{\eta}_k, \boldsymbol{\mu}_k (0) ) \\ \mathcal{L} \boldsymbol{\mu}_k
\end{bmatrix},
\label{eq:extended state model}
\end{equation}
where $\mathcal{L}$  is the left shift operator, i.e., $\mathcal{L} \boldsymbol{\mu}_k = \boldsymbol{\mu}_{k+1}$, and $\boldsymbol{\mu}_k (0)= \boldsymbol{\nu}_k$ is the first element of the control sequence of $\boldsymbol{\mu}$ at the time step $k$~\cite{korda2018linear}.
Now, we can define the Koopman operator $\mathcal{K}_f: \mathcal{F} \rightarrow  \mathcal{F}$ for~\eqref{eq:extended state model} as
\begin{equation}
\mathcal{K}_f \pi (\chi_k ) = \pi (F(\chi_k) ),
\label{eq:koopman definition}
\end{equation}
where $\pi: \mathcal{N} \times \mathcal{I} (\mathcal{V}) \rightarrow \mathds{R}$ is a real-valued function, which belongs to the extended function space $\mathcal{F}$~\cite{mauroy2020koopman}.
%
%
%
%
Interestingly, it is observed that the Koopman operator is linear in the function space $\mathcal{F}$, even though the dynamical system is nonlinear~\cite{mauroy2020koopman}.
%

\subsection{Koopman Operator-based Vehicle Modeling}

In this subsection, we introduce the Koopman operator-based vehicle modeling approach. In~\eqref{eq:koopman definition}, we can see that the Koopman operator ${\mathcal K}_f$ lies in the infinite-dimensional space for representing the original nonlinear dynamics~\cite{koopman1931hamiltonian, mauroy2020koopman}. Thus, it is challenging to directly use the Koopman operator if the finite-dimensional approximation of the Koopman operator is not obtained.
To resolve this problem, this paper uses the EDMD method from~\cite{korda2018linear, williams2015data}. Let us first recall the state, the control input, and the external signal of the vehicle dynamics~\eqref{eq:discrete nonlinear vehicle model} such that
\begin{equation}
\begin{split}
\textbf{x} &=
\begin{bmatrix}
e_y & e_{yL} & \dot{e}_y & e_{\psi} & \dot{\psi} & a_y & V_y
\end{bmatrix}^T,
%
\\
\textbf{u} &= \delta,
%
\\
\varphi &=
\begin{bmatrix}
 V_x & C_2 & C_3
\end{bmatrix}^T.
%
%
\end{split}
\label{eq:state}
\end{equation}
\begin{remark}
Since this paper focuses on vehicle modeling for lateral motion control, the longitudinal speed $V_x$ can be the external signal. In addition, the curvature and curvature rate of the road lane, i.e., $C_2$ and $C_3$, is independent of the vehicle state, as mentioned in Subsection~\ref{subsec:road lane}~\cite{lee2016robust, kim2023koopman}. Thus, $C_2$ and $C_3$ can be the external signal. In general, $\varphi$ is available with an in-vehicle sensor and a camera.
\end{remark}
Then, we take and modify the approach from~\cite{korda2018linear, williams2015data} to define the extended state given by
\begin{equation}
\mathcal{X}_k = \begin{bmatrix}
\textbf{x}_k \\
\textbf{u}_k \\
\varphi_k
\end{bmatrix},
\end{equation}
where $\mathcal{X}_k \in {\mathds R}^{n+m+d}$ is the extended state. Then, we can have the discrete-time autonomous system for the extended state such that
\begin{equation}
\mathcal{X}_{k+1} = F (\mathcal{X}_k) :=
\begin{bmatrix}
f_d (\textbf{x}_k, \textbf{u}_k, \varphi_k )\\
\textbf{u}_{k+1} \\
\varphi_{k+1}
\end{bmatrix}.
\end{equation}
The Koopman operator can then be obtained by
\begin{equation}
{\mathcal K} \boldsymbol{\xi}(\mathcal{X}_k) = \boldsymbol{\xi} ( F (\mathcal{X}_k) ),
\label{eq:koopman finite}
\end{equation}
where $\boldsymbol{\xi}(\textbf{x}_k, \textbf{u}_k, \varphi_k)= \begin{bmatrix} \boldsymbol{\phi} (\textbf{x}_k ) & \textbf{u}_k & \textbf{w}_k\end{bmatrix}^T$ is the lifting function. In this case, we consider $\boldsymbol{\phi} ( {\textbf{x}_k} )$ as
\begin{equation}
\boldsymbol{\phi} ( {\textbf{x}_k} )
= \begin{bmatrix}
\phi_1 ( \textbf{x}_k ) \\
\phi_2 ( \textbf{x}_k ) \\
\vdots \\
\phi_N ( \textbf{x}_k )
\end{bmatrix} \in \mathds{R}^N,
\label{eq:lifting function}
\end{equation}
where $\phi_i: {\mathds R}^n \rightarrow {\mathds R}$ is the real-valued lifting function, and $N \gg n$. In general, the lifting function $\phi_i$ is a user-defined nonlinear function. In this paper, the EDMD method from~\cite{korda2018linear} is used to approximate the Koopman operator in~\eqref{eq:koopman finite} as a finite-dimensional linear operator. The analytical solution is obtained by
\begin{equation}
\min_{\mathcal K}
\sum_{i=0}^{M-1}
\|
\boldsymbol{\xi} ( \mathcal{X}_{i+1} ) -
{\mathcal K} \boldsymbol{\xi} (\mathcal{X}_i)
\|_2^2,
\label{eq:minimization for koopman}
\end{equation}
where $M$ is the length of a dataset. To solve the optimization problem, we first need to collect a dataset by conducting several numerical simulations. Then the dataset matrices are given as
\begin{equation}
\begin{split}
\textbf{X} &=
\begin{bmatrix}
\textbf{x}_0 & \textbf{x}_1 & \dots & \textbf{x}_{M-1}
\end{bmatrix} \in {\mathds R}^{n \times M},
\\
\textbf{U} &=
\begin{bmatrix}
\textbf{u}_0 & \textbf{u}_1 & \dots & \textbf{u}_{M-1}
\end{bmatrix} \in {\mathds R}^{m \times M},
\\
\textbf{D} &=
\begin{bmatrix}
\varphi_0 & \varphi_1 & \dots & \varphi_{M-1}
\end{bmatrix} \in {\mathds R}^{d \times M},
\\
\textbf{Y} &=
\begin{bmatrix}
\textbf{x}_1 & \textbf{x}_2 & \dots & \textbf{x}_M
\end{bmatrix} \in {\mathds R}^{n \times M},
\end{split}
\label{eq:data collection}
\end{equation}
where M is the length of a dataset. Let us define the basis function~$\phi_i$ such that
\begin{equation}
\textbf{z}_k = \boldsymbol{\phi} ( {\textbf{x}_k} ) :=
\begin{bmatrix}
\textbf{x}_k \\
\phi_{N-n}(\textbf{x}_k) \\
\vdots \\
\phi_{N}(\textbf{x}_k)
\end{bmatrix} \in {\mathds R}^{N}.
\label{eq:lifted state}
\end{equation}
Since predicting the future input and external signal is not of interest~\cite{korda2018linear, kim2023uncertainty}, this paper omits the last $(m+d)$ rows of each $\boldsymbol{\xi} ( \mathcal{X}_{i+1} ) - {\mathcal K} \boldsymbol{\xi} (\mathcal{X}_i)$ in~\eqref{eq:minimization for koopman}. However, we focus on the first $N$ rows such that
\begin{equation}
\min_{\mathcal K}
\sum_{i=0}^{M-1}
\Big{\|}
\begin{bmatrix} \boldsymbol{\phi} (\textbf{x}_{k+1} ) \\ \textbf{u}_{k+1} \\ \varphi_{k+1} \end{bmatrix}
-
{\mathcal K}  \begin{bmatrix} \boldsymbol{\phi} (\textbf{x}_k ) \\ \textbf{u}_k \\ \varphi_k \end{bmatrix}
\Big{\|}_2^2
\end{equation}
where
\begin{equation*}
{\mathcal K} = \begin{bmatrix}
A & B & B_{\varphi} \\
(*) & (*) & (*) \\
(*) & (*) & (*)
\end{bmatrix}.
\end{equation*}
Then, \eqref{eq:minimization for koopman} can be converted into
\begin{equation}
\min_{A, B, B_{\varphi}}
\|
\tilde{\textbf{Y}}
-A \tilde{\textbf{X}} - B \textbf{U} - B_{\varphi} \textbf{D}
\|_F^2,
\label{eq:minimization for AB}
\end{equation}
where
\begin{equation*}
\begin{split}
\tilde{\textbf{X}} &=
\begin{bmatrix}
\boldsymbol{\phi} ( \textbf{x}_0 ) &
\boldsymbol{\phi} ( \textbf{x}_1 ) &
\dots &
\boldsymbol{\phi} ( \textbf{x}_{M-1})
\end{bmatrix},
\\
\tilde{\textbf{Y}} &=
\begin{bmatrix}
\boldsymbol{\phi} ( \textbf{x}_1 ) &
\boldsymbol{\phi} ( \textbf{x}_2 ) &
\dots &
\boldsymbol{\phi} ( \textbf{x}_M )
\end{bmatrix},
\end{split}
\end{equation*}
and $\| \cdot \|_F$ is the Frobenius norm. By solving the optimization problem~\eqref{eq:minimization for AB}~\cite{kim2023koopman, korda2018linear}, we can obtain the linear model such that
\begin{equation}
\begin{split}
\textbf{z}_{k+1} &= A \textbf{z}_k + B \textbf{u}_k + B_{\varphi} \varphi_k + G \textbf{w}_k, \\
\textbf{x}_{k} &= C \textbf{z}_k.
\end{split}
\label{eq:lift_sys}
\end{equation}
where the reconstruction matrix $C$ is obtained by $C=\begin{bmatrix} \boldsymbol{I}_{(n \times n)} & \textbf{0} \end{bmatrix}$. Here, note that this paper introduces the residual term $\textbf{w}_k$ in~\eqref{eq:lift_sys}.
This is because there may be a residual term in solving~\eqref{eq:minimization for AB}, which results in the approximation error of the Koopman operator~\cite{korda2018linear, zhang2022robust, kim2023uncertainty}.
Thus, the modeling error of the tuplet $(A, B, B_{\varphi})$ is inevitable due to the approximated Koopman operator in a finite-dimensional space.
To resolve the problem, we consider the residual term~$\textbf{w}_k$ of the Koopman-based model in designing the controller. Moreover, we assume that $\textbf{w}_k$ is the bounded probabilistic signal such that $\textbf{w}_k \in \mathcal{W}=\{ \textbf{w}_k | \|\textbf{w}_k\|_{\infty} \leq \bar{\textbf{w}} \}$, $\mathds{E}[\textbf{w}_k]=\textbf{0}$, and the covariance matrix of $\textbf{w}_k$ is $ \Sigma_{\textbf{w}}$. In the following subsection, we will introduce the design process of the proposed controller considering the approximation error $\textbf{w}_k$.

\begin{figure}[t]
\center
\includegraphics[width=\columnwidth]{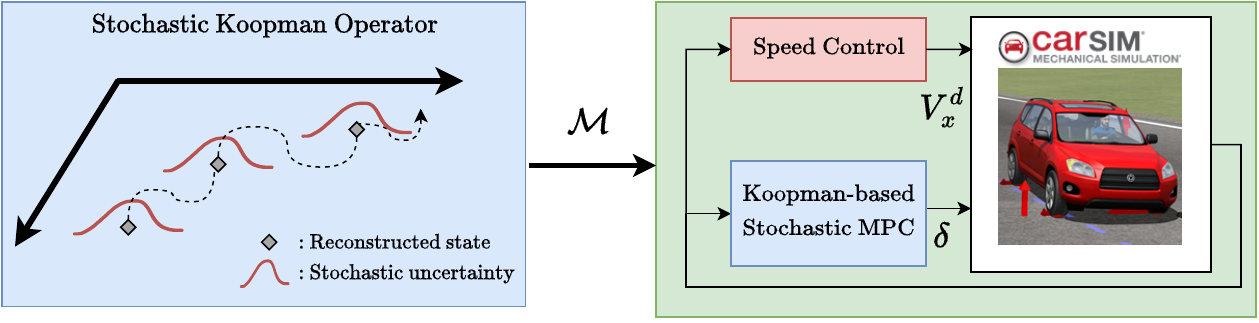}
\caption{Schematic illustration of the proposed method. The approximation error of the Koopman operator is handled as a stochastic uncertainty.}
\end{figure}
\section{Koopman Operator-based Stochastic Model Predictive Control}
\label{sec:smpc}

\subsection{System State, Objective, and Constraints}

In this subsection, we first describe the system state to be controlled. One can denote the predicted trajectories with $k+i|k$, i.e., predicted at time $k$ and $i$ steps into the future. We define $\textbf{z}_{k+i|k}$ as
\begin{equation}
\textbf{z}_{k+i|k} = \textbf{s}_{k+i|k} + \textbf{e}_{k+i|k},
\label{eq:state decomposition}
\end{equation}
where the state $\textbf{z}_{k+i|k}$ is decomposed into two parts: the deterministic state $\textbf{s}_{k+i|k}$ and the zero mean stochastic error $\textbf{e}_{k+i|k}$, i.e., $\mathds{E}[\textbf{z}_{k+i|k}]= \textbf{s}_{k+i|k}$. Let us define the stabilizing control gain $K$ satisfying the following Riccati equation such that
\begin{equation}
P = A^T P A - A^T P B ( R + B^T P B)^{-1} B^T P A + Q
\label{eq:riccati equation}
\end{equation}
where $K = (R + B^T P B)^{-1} B^T P A$. Then, as it is common in the linear SMPC scheme, e.g.,~\cite{farina2016stochastic}, the control strategy is given by
\begin{equation}
\textbf{u}_{k+i|k} = K \textbf{z}_{k+i|k} + \textbf{v}_{k+i|k}
\label{eq:control input}
\end{equation}
where $\textbf{v}_{k+i|k} \in \mathds{R}^m$ is the optimal control input obtained by solving the SMPC problem. Using \eqref{eq:state decomposition} and \eqref{eq:control input}, one can derive the dynamics of the deterministic state and error state given by
\begin{subequations}
\begin{align}
\textbf{s}_{k+i+1|k} &= A_{cl} \textbf{s}_{k+i|k} + B \textbf{v}_{k+i|k} +B_{\varphi} \varphi_{k+i|k} \label{eq:deterministic state dynamics}\\
\textbf{e}_{k+i+1|k} &=  A_{cl} \textbf{e}_{k+i|k} + G \textbf{w}_{k+i|k}
\label{eq:error dynamics}
\end{align}
\end{subequations}
where $A_{cl} = A - BK$ is strictly stable.
\begin{remark}
As mentioned in~\ref{subsec:road lane}, $C_2$ and $C_3$ are intrinsic parameters of a road shape independent of the vehicle's lateral motion~\cite{rajamani2011vehicle}. Thus, with a given road, it is immediate to obtain $C_2$ and $C_3$ in the prediction horizon~\cite{son2014robust}. Moreover, the vehicle speed can be obtained with speed planning and control according to the road curvature~\cite{quan2022linear}. Therefore, this paper assumes that $\varphi$ is available in the horizon $N$.
\end{remark}

Let the cost function in a stochastic framework be
\begin{equation}
\begin{split}
\mathcal{J} &= \mathds{E} \Bigg[ \sum_{i=0}^{N_p-1} \Big( \textbf{z}_{k+i|k}^T Q_{xx} \textbf{z}_{k+i|k} + \textbf{z}_{k+i|k}^T Q_{xv} \textbf{v}_{k+i|k} \\
&~~~~~~~~+ \textbf{v}_{k+i|k}^T Q_{vv} \textbf{v}_{k+i|k} \Big) + \textbf{z}_{N_p|k}^T P \textbf{z}_{N_p|k} \Bigg],
\end{split}
\label{eq:stochastic cost fcn}
\end{equation}
where $\mathds{E} [ \cdot ]$ denotes the expectation value, $Q_{xx} \succeq 0$, $Q_{xv} \succeq 0$, $Q_{vv} \succ 0$, and $P$ is the solution to~\eqref{eq:riccati equation}. Substituting~\eqref{eq:state decomposition} into~\eqref{eq:stochastic cost fcn} leads to the cost function in a deterministic framework by using $\mathds{E}[\textbf{z}_{k+i|k}]= \textbf{s}_{k+i|k}$ such that
\begin{equation}
\begin{split}
\mathcal{J} &=  \sum_{i=0}^{N_p-1} \Big( \textbf{s}_{k+i|k}^T Q_{xx} \textbf{s}_{k+i|k} \\
&~~~~~~~~+ \textbf{s}_{k+i|k}^T Q_{xv} \textbf{v}_{k+i|k} + \textbf{v}_{k+i|k}^T Q_{vv} \textbf{v}_{k+i|k} \Big) \\
&~~~~~~~~+ \textbf{s}_{k+N_p|k}^T P \textbf{s}_{k+N_p|k} + c,
\end{split}
\label{eq:deterministic cost fcn}
\end{equation}
where $c=\mathds{E} [\sum_{i=0}^{N_p-1} ( \textbf{e}_{k+i|k}^T Q_{xx} \textbf{e}_{k+i|k} ) + \textbf{e}_{N_p|k}^T P\textbf{e}_{N_p|k} ]$ which does not depend on the decision variables $\textbf{v}_{k+i|k}$. Thus, we can convert the stochastic cost function into the deterministic cost function.

In terms of the stochastic error and its influence on the deterministic state, the state constraints at the $i$-th time step in the receding horizon can be described as probabilistic constraints based on the risk level or allowable probability of violation, i.e., $\epsilon_i \in [0,1]$, such that
\begin{equation}
\mathcal{P}[H_i \textbf{z}_{k+i|k} \leq h_i] \geq 1 - \epsilon_i,
\label{eq:chance constraint}
\end{equation}
where $\mathcal{P} [ \cdot ] $ denotes the probability, $H_i \in \mathds{R}^{p \times N}$, and $h_i \in \mathds{R}^{p}$. Then, the following theorem provides the process to convert~\eqref{eq:chance constraint} into the deterministic constraints.

\begin{theorem}
At time $k$ with a given prediction horizon $N_p$, the probabilistic constraints of~\eqref{eq:chance constraint} are satisfied if and only if the following deterministic constraints are satisfied such that
\begin{equation}
H_i \textbf{s}_{k+i|k} \leq h_i - q_i(1 - \epsilon_i), \text{ for } i=0, \cdots, N_p-1
\end{equation}
where $q_i(1-\epsilon_i)=\sqrt{H_i^T \Sigma_i H_i} \sqrt{\frac{1-\epsilon_i}{\epsilon_i}}$.
\end{theorem}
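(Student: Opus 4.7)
The plan is to reduce the chance constraint~\eqref{eq:chance constraint} to a deterministic tightening of the mean trajectory via the one-sided Chebyshev--Cantelli inequality, interpreted row-by-row on $H_i$. The essential structural input is the decomposition~\eqref{eq:state decomposition} and the fact that the error part obeys the linear zero-mean recursion~\eqref{eq:error dynamics}, so that $H_i \textbf{e}_{k+i|k}$ is zero-mean with a covariance I can compute explicitly from the error dynamics.

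First I would substitute $\textbf{z}_{k+i|k} = \textbf{s}_{k+i|k} + \textbf{e}_{k+i|k}$ into~\eqref{eq:chance constraint}; treating $\textbf{s}_{k+i|k}$ as deterministic given the information available at time $k$, the constraint becomes
\begin{equation*}
\mathcal{P}[H_i \textbf{e}_{k+i|k} \leq h_i - H_i \textbf{s}_{k+i|k}] \geq 1 - \epsilon_i.
\end{equation*}
Using $\mathds{E}[\textbf{w}_k]=\textbf{0}$ and $\mathrm{cov}(\textbf{w}_k)=\Sigma_{\textbf{w}}$ together with~\eqref{eq:error dynamics}, the error covariance $\Sigma_i := \mathds{E}[\textbf{e}_{k+i|k}\textbf{e}_{k+i|k}^T]$ propagates through the Lyapunov recursion
\begin{equation*}
\Sigma_{i+1} = A_{cl}\Sigma_i A_{cl}^T + G \Sigma_{\textbf{w}} G^T, \qquad \Sigma_0 = \mathbf{0},
\end{equation*}
so each row $H_i^{(j)} \textbf{e}_{k+i|k}$ is a scalar, zero-mean random variable with variance $\sigma_{i,j}^2 = H_i^{(j)} \Sigma_i (H_i^{(j)})^T$.

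I would then invoke Cantelli's inequality: for any zero-mean scalar $Y$ with variance $\sigma^2$ and any $a>0$, $\mathcal{P}[Y > a] \leq \sigma^2/(\sigma^2 + a^2)$. Imposing the right-hand side to be at most $\epsilon_i$ and solving for $a$ yields $a \geq \sigma \sqrt{(1-\epsilon_i)/\epsilon_i}$. Identifying $a_j$ with the slack $h_i^{(j)} - H_i^{(j)} \textbf{s}_{k+i|k}$ and assembling the $p$ row-wise bounds gives exactly $H_i \textbf{s}_{k+i|k} \leq h_i - q_i(1-\epsilon_i)$ with $q_i(1-\epsilon_i)=\sqrt{H_i^T \Sigma_i H_i}\sqrt{(1-\epsilon_i)/\epsilon_i}$ understood component-wise.

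The hard part will be justifying the ``if and only if'' claim, because Cantelli's inequality is one-sided and therefore only supplies a sufficient tightening in general. Necessity requires appealing to the extremal two-point distribution that attains equality in Cantelli, or equivalently reinterpreting the statement as the definition of the deterministic reformulation used by K-SMPC; I would flag this explicitly at the start of the proof and adopt the former viewpoint, as is standard in the linear SMPC literature~\cite{farina2016stochastic}. A secondary technical point is the joint-versus-individual character of the vector chance constraint in~\eqref{eq:chance constraint}: the row-wise Cantelli tightening enforces probability $1-\epsilon_i$ per row, which is the individual-chance-constraint convention; promoting it to a joint guarantee would require a Bonferroni-style correction on $\epsilon_i$, and I would state clearly which convention is in force before applying Cantelli.
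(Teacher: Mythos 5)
Your proposal follows essentially the same route as the paper's proof: substitute the decomposition~\eqref{eq:state decomposition} into~\eqref{eq:chance constraint}, isolate the zero-mean term $H_i \textbf{e}_{k+i|k}$, and apply the Chebyshev--Cantelli inequality with the covariance $\Sigma_i$ propagated through the error dynamics~\eqref{eq:error dynamics}. The two caveats you flag are genuine and are not addressed in the paper's own proof either: Cantelli gives only sufficiency, so the ``if and only if'' is justified only by appealing to the extremal distribution attaining equality (or by adopting the tightening as the definition of the deterministic surrogate), and the bound is applied row-wise, i.e., as an individual rather than joint chance constraint. Note also that your recursion $\Sigma_{i+1}=A_{cl}\Sigma_i A_{cl}^T + G\Sigma_{\textbf{w}}G^T$ is the correct covariance propagation for~\eqref{eq:error dynamics}, whereas the paper writes the transposes the other way and initializes at $\Sigma_0=\Sigma_{\textbf{w}}$ rather than $\mathbf{0}$; the latter is a (more conservative) modeling choice rather than a consequence of $\textbf{s}_{k|k}=\textbf{z}_{k|k}$, and is worth stating explicitly as you do.
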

\begin{proof}
By using~\eqref{eq:state decomposition}, we can rewrite~\eqref{eq:chance constraint} as
\begin{equation}
\mathcal{P}[H_i \textbf{s}_{k+i|k} \leq h_i - H_i \textbf{e}_{k+i|k}] \geq 1 - \epsilon_i.
\end{equation}
Then, we can obtain
\begin{equation}
H_i \textbf{s}_{k+i|k} \leq h_i  - q_i(1 - \epsilon_i)
\label{eq:constraint tightening}
\end{equation}
where $\mathcal{P}[-q_i(1 - \epsilon_i) \leq -H_i \textbf{e}_{k+i|k}]=1 - \epsilon_i$ because $\textbf{s}_{k+i|k}$ is the deterministic variable. Then, it is immediate to derive $q_i(1-\epsilon_i)=\sqrt{H_i^T \Sigma_i H_i} \sqrt{\frac{1-\epsilon_i}{\epsilon_i}}$ by Chebyshev\textendash Cantelli Inequality~\cite{marshall1960multivariate}, where $\Sigma_{i+1} = A_{cl}^T \Sigma_i A_{cl} + G^T \Sigma_{\textbf{w}} G$ with $\Sigma_{0}=\Sigma_{w}$.
\end{proof}
Consequently, we can define the sets of the deterministic state constraints and input hard constraints for the K-SMPC as
\begin{subequations}
\label{eq:constraint set}
\begin{align}
\mathcal{S} &= \{ \textbf{s}_{k+i|k} \in \mathds{R}^{N} ~|~  H_i \textbf{s}_{k+i|k} \leq h_i - q_i(1 - \epsilon_i) \}, \label{eq:constraint set_state}\\
\mathcal{U} &= \{ \textbf{u}_{k+i|k} \in \mathds{R}^m ~|~ \underline{\textbf{u}} \leq \textbf{u}_{k+i|k} \leq \overline{\textbf{u}}  \}, \label{eq:constraint set_input}
\end{align}
\end{subequations}
where $\underline{\textbf{u}}$, and $\overline{\textbf{u}}$ denote the lower bound, and the upper bound, respectively.
Moreover, a constraint tightening method similar to~\eqref{eq:constraint tightening} can be applied to define the terminal region such that
\begin{equation}
\mathcal{S}_f = \{ \textbf{s}_{k+N_p|k} \in \mathds{R}^{N} ~|~ H_N \textbf{s}_{k+N_p|k} \leq h_{N_p} - q_N (1 - \epsilon_{N_p})\}.
\end{equation}

\begin{figure}[t]
\centering
\includegraphics[width=\columnwidth]{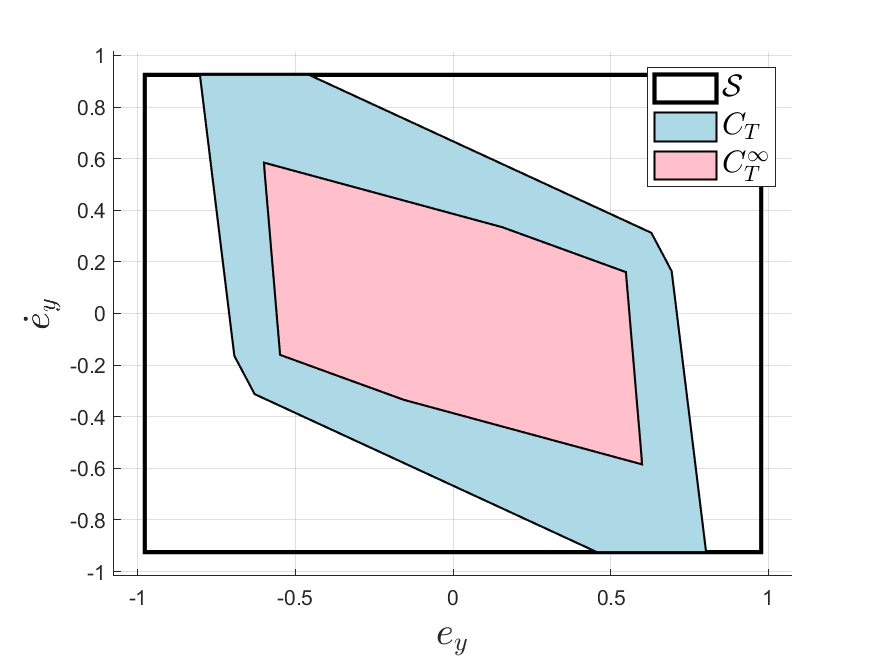}
\caption{Recursive set projected on the space of the first and third state of $\textbf{s}_k$.}
\label{fig:RecursiveSet}
\end{figure}
\subsection{Recursive Feasibility and Stability of Resulting K-SMPC Algorithm}

In order to guarantee the recursive feasibility of the K-SMPC, we construct the first-step state constraint of the prediction horizon~\cite{lorenzen2016constraint}. In~\cite{kouvaritakis2010explicit}, it was reported that the probability of the constraint satisfaction in $i$ steps of the prediction horizon at time $k$ is not equal to the probability of the constraint satisfaction in $i-1$ steps of the prediction horizon at time $k+1$.
Thus, we need to use further constraints to satisfy the recursive feasibility. In~\cite{kouvaritakis2010explicit}, the authors proposed a mixing stochastic and worst-case state prediction in constraint tightening for recursive feasibility in the presence of perturbation. However, in~\cite{korda2011strongly}, the authors point out the mixed stochastic/worst-case approach is rather restrictive and has higher average costs if the solution is near a chance constraint.
Instead,~\cite{korda2011strongly} proposed the constraint only in the first step of the prediction horizon where only the recursive feasibility is of interest. Therefore, we focus on the first step state constraint for recursive feasibility, proposed by a paper in the model-based setting~\cite{lorenzen2016constraint}. Thus, the proposed method is less conservative than the mixed-state prediction approach, e.g.,~\cite{kouvaritakis2010explicit}.

Let us define the following set
\begin{equation*}
  C_T = \left\{
\begin{tabular}{ c | l }
	\multirow{3}{4.3em}
	{$\textbf{s}_{0|k} \in \mathds{R}^{N}$}
&~ $\exists \textbf{v}_{0|k}, \cdots, \textbf{v}_{N_p-1|k}$ \\
&~ \eqref{eq:deterministic state dynamics} and \eqref{eq:constraint set} hold\\
&~ $\textbf{s}_{k+N_p | k} \in \mathcal{S}_f $
\end{tabular}
  \right\}
\end{equation*}
as the $T$-step set with a feasible first step state constraint for the deterministic system~\eqref{eq:deterministic state dynamics} under tightened constraints. The $T$-step set is obtained by the backward recursion from~\cite{gutman1987algorithm}. Since $C_T$  is not necessarily robust positively invariant with respect to the disturbance set $\mathcal{W}$, further computation of the robust control invariant polytope $C_T^{\infty}$ is required. To calculate $C_T^{\infty}$, let us define a set as
\begin{equation*}
  C_T^{i+1}  = \left\{
\begin{tabular}{ c | l }
	\multirow{2}{3em}
	{$\textbf{s}  \in C_T^{i}$}
&~ $\exists v_{0|k} \text{ such that }\eqref{eq:constraint set_input} \text{ holds,}$ \\
&~ $ \textbf{s}_{k+1} \in C_T^{i} \ominus G \mathcal{W}.$
\end{tabular}
  \right\}
\end{equation*}
The set $C_T^{\infty}$ is then computed by $C_T^{\infty}=\cap_{i=0}^{\infty} C_T^i$, where the initial set is $C_T^0=C_T$. The recursive computation method can provides the $C_T^{\infty}$ until $C_T^{i}=C_T^{i+1}$~\cite{blanchini2008set, lorenzen2016constraint}.
This paper adopts the Multi-parametric toolbox from~\cite{herceg2013multi} in MATLAB to compute the set $C_T^{\infty}$, as shown in Fig.~\ref{fig:RecursiveSet}.

In this paper, we additionally consider the soft constraints on the first-step input. Thus, the slack variables, i.e., $\underline{\sigma} \in \mathds{R}$ and $\overline{\sigma}  \in \mathds{R}$, are used in the cost function given by
\begin{equation}
\mathcal{J}_s = \mathcal{J} +  \underline{\sigma}^T S  \underline{\sigma} + \overline{\sigma}^T S  \overline{\sigma}
\end{equation}
where $S > 0$. Then, we have the final K-SMPC algorithm such that
\begin{subequations}
\label{eq:smpc problem}
\begin{align}
&\textbf{v}_{\cdot|k}^{*} = \underset {\textbf{v}_{k+i|k}} {\text{arg min}} ~ \mathcal{J}_s \label{eq:smpc problem_CostFcn}\\
\text{subject}&~\text{to} \nonumber \\
&\textbf{s}_{k+i+1|k} = A_{cl} \textbf{s}_{k+i|k} + B \textbf{v}_{k+i|k} +B_{\varphi} \varphi_{k+i|k} \label{eq:smpc problem_system},\\
&\textbf{s}_{k+i|k} \in \mathcal{S} \label{eq:smpc problem_StateCon},\\
&\textbf{u}_{k+i|k} \in \mathcal{U} \label{eq:smpc problem_InputHardCon},\\
&\textbf{s}_{k+1|k} \in C_T^{\infty} \ominus G \mathcal{W} \label{eq:smpc problem_recursive},\\
%
%
&\underline{\textbf{u}}_{s} - \underline{\sigma} \leq \textbf{u}_{k|k} \leq \overline{\textbf{u}}_{s} + \overline{\sigma} \label{eq:smpc problem_InputSoftCon},\\
& 0 \leq \underline{\sigma} \leq  \underline{\textbf{u}}_{s} - \underline{\textbf{u}},  ~~ 0 \leq  \overline{\sigma} \leq  \overline{\textbf{u}} - \overline{\textbf{u}}_{s} \label{eq:smpc problem_slack},\\
& \textbf{s}_{k+N|k} \in \mathcal{S}_f \label{eq:smpc problem_TerminalCon},\\
&\textbf{s}_{k|k} = \textbf{z}_{k|k},
%
%
\\
& i \in \{0, \dots,N-1 \},
\end{align}
\end{subequations}
where $\underline{\textbf{u}}_{s} \in \mathds{R}$ and $\overline{\textbf{u}}_{s} \in \mathds{R}$ are the upper and lower bound for the first control input, respectively.
In~\eqref{eq:smpc problem_InputSoftCon}, we can see that the soft constraint is used in the first-step input. We also consider the slack variable to satisfy the input constraint~\eqref{eq:smpc problem_InputHardCon} by imposing~\eqref{eq:smpc problem_slack}.
\begin{remark}
We impose the input constraint~\eqref{eq:smpc problem_InputHardCon} to consider the physically bounded front tire angle $\delta$ of vehicles. In addition, it is needed to minimize the tire angle on straight roads or curved roads with small curvature. To that end, we additionally impose the input constraint~\eqref{eq:smpc problem_InputSoftCon} with slack variables, i.e., $\underline{\sigma}$ and $\overline{\sigma}$.
\end{remark}
As mentioned above, the recursive feasibility is guaranteed by the constraint~\eqref{eq:smpc problem_recursive}. Moreover, the following theorem provides the details of the recursive feasibility and its proof.
\begin{theorem}[Recursive Feasibility~\cite{lorenzen2015improved}]
Let us consider the lifted system~\eqref{eq:lift_sys} with the controller~\eqref{eq:control input}. If there exists a feasible solution when $k=0$, then the optimization problem~\eqref{eq:smpc problem} is feasible for $k>0$.
\end{theorem}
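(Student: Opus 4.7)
The plan is to establish recursive feasibility by induction on $k$, exploiting the robust control invariance of $C_T^{\infty}$ built into the first-step state constraint \eqref{eq:smpc problem_recursive}. Assuming a feasible solution $(\textbf{v}^*_{k|k},\ldots,\textbf{v}^*_{k+N_p-1|k})$ with associated deterministic trajectory $(\textbf{s}^*_{k|k},\ldots,\textbf{s}^*_{k+N_p|k})$ exists at time $k$, the goal is to exhibit a feasible candidate sequence at $k+1$ starting from the measured lifted state $\textbf{z}_{k+1|k+1}$.

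First, I would relate $\textbf{z}_{k+1|k+1}$ to $\textbf{s}^*_{k+1|k}$. Since the MPC initializes $\textbf{s}_{k|k}=\textbf{z}_{k|k}$, applying the realized control $\textbf{u}_{k|k}=K\textbf{z}_{k|k}+\textbf{v}^*_{k|k}$ propagates the true lifted state to
\begin{equation*}
\textbf{z}_{k+1|k+1} = A_{cl}\textbf{z}_{k|k} + B\textbf{v}^*_{k|k} + B_{\varphi}\varphi_k + G\textbf{w}_k = \textbf{s}^*_{k+1|k} + G\textbf{w}_k,
\end{equation*}
so the realized error at $k+1$ equals $G\textbf{w}_k \in G\mathcal{W}$. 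Invoking \eqref{eq:smpc problem_recursive}, namely $\textbf{s}^*_{k+1|k}\in C_T^{\infty}\ominus G\mathcal{W}$, and the definition of the Pontryagin difference, it follows that $\textbf{z}_{k+1|k+1}\in C_T^{\infty}$ for every admissible disturbance realization.

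Second, because $C_T^{\infty}$ was constructed as the robust control invariant subset of the $T$-step set $C_T$, the inclusion $\textbf{z}_{k+1|k+1}\in C_T^{\infty}$ supplies two properties simultaneously: (i) $\textbf{z}_{k+1|k+1}\in C_T$, hence there exists $(\textbf{v}^{c}_{k+1|k+1},\ldots,\textbf{v}^{c}_{k+N_p|k+1})$ producing a deterministic trajectory that satisfies \eqref{eq:smpc problem_system}, the tightened state constraint \eqref{eq:smpc problem_StateCon}, the input hard constraint \eqref{eq:smpc problem_InputHardCon}, and the terminal constraint \eqref{eq:smpc problem_TerminalCon}; and (ii) the robust invariance of $C_T^{\infty}$ guarantees that $\textbf{v}^{c}_{k+1|k+1}$ can be chosen so that $\textbf{s}^{c}_{k+2|k+1}\in C_T^{\infty}\ominus G\mathcal{W}$, verifying \eqref{eq:smpc problem_recursive} at time $k+1$. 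The soft input constraints \eqref{eq:smpc problem_InputSoftCon}--\eqref{eq:smpc problem_slack} are then satisfiable by choosing the slack variables $\underline{\sigma},\overline{\sigma}$ within their admissible range, and the availability of $\varphi_{k+i|k+1}$ over the horizon is ensured by the remark following \eqref{eq:error dynamics}.

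The main obstacle in a fully rigorous write-up is the joint satisfaction of the two requirements at $k+1$: the full-horizon feasibility inherited from $\textbf{z}_{k+1|k+1}\in C_T$ must be compatible with the first-step recursive constraint inherited from the invariance of $C_T^{\infty}$. This compatibility is precisely the design purpose of the recursion $C_T^{i+1}=\{\textbf{s}\in C_T^{i}\mid \exists v \text{ with } \textbf{s}_{+1}\in C_T^{i}\ominus G\mathcal{W}\}$, so the argument reduces to citing the standard fixed-point result (see, e.g., \cite{blanchini2008set, lorenzen2016constraint}) establishing that $C_T^{\infty}=\cap_{i=0}^{\infty}C_T^{i}$ is a robust control invariant subset of $C_T$ under the tightened constraints. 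Once this set-theoretic fact is granted, the induction step is immediate, and the base case $k=0$ is supplied by the hypothesis of the theorem.
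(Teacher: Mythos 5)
Your proof is correct and follows essentially the same route as the paper's: propagate the realized state to get $\textbf{z}_{k+1} = \textbf{s}^*_{k+1|k} + G\textbf{w}_k$, use the first-step constraint $\textbf{s}^*_{k+1|k}\in C_T^{\infty}\ominus G\mathcal{W}$ and the Pontryagin difference to conclude $\textbf{z}_{k+1}\in C_T^{\infty}$, and then invoke the robust control invariance of $C_T^{\infty}\subseteq C_T$ to recover feasibility of all constraints at $k+1$. You actually spell out the final step (why membership in $C_T^{\infty}$ yields a feasible candidate over the whole horizon plus the first-step constraint) that the paper delegates to its citation of Lorenzen et al., so your write-up is, if anything, more self-contained.
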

\begin{proof}
If the K-SMPC optimization problem~\eqref{eq:smpc problem} is feasible at $k=0$, then $\textbf{s}_{k+1|k} \in C_T^{\infty} \ominus G \mathcal{W}$. In the next time step, we can obtain $\textbf{z}_{k+1}=\textbf{s}_{k+1|k} + G \textbf{w}_k \in C_T^{\infty}$ for every realization $\textbf{w}_k \in \mathcal{W}$, i.e., $\textbf{z}_{k+1}$ is the feasible state in the next time. Therefore, the K-SMPC optimization problem~\eqref{eq:smpc problem} is recursively feasible. Refer to~\cite{lorenzen2015improved} for more details.
\end{proof}

In order to prove the stability of the closed-loop system constructed by~\eqref{eq:smpc problem}, we introduce a discrete-time Input-to-State Stability (ISS) Lyapunov function~\cite{jiang2001input}.
\begin{definition}[ISS-Lyapunov function~\cite{jiang2001input}]
A function $V:\mathds{R}^N \rightarrow \mathds{R}_+$ is an ISS-Lyapunov function for system $\textbf{z}_{k+1}=f_L (\textbf{z}_{k}, \mu_k)$ if the following holds:
\begin{itemize}
\item There exist $\mathcal{K}_{\infty}$ functions $\alpha_1$, $\alpha_2$ such that
    \[\alpha_1(\| \textbf{z} \|) \leq V(\textbf{z})
    \leq \alpha_2(\| \textbf{z} \|), ~~\forall \textbf{z} \in \mathds{R}^N. \]
\item There exist a $\mathcal{K}_{\infty}$ function $\alpha_3$ and a $\mathcal{K}$ function $\gamma$ such that
    \[V(f_L (\textbf{z}, \mu)) - V(\textbf{z}) \leq - \alpha_3(\| \textbf{z} \|) + \gamma (\| \mu \|)\]
    for all $\textbf{z} \in \mathds{R}^N$, and $\mu \in \mathcal{M}$.
\end{itemize}
\label{def:ISS}
\end{definition}
\noindent
Using Definition~\ref{def:ISS}, the following theorem provides the stability of the closed-loop system.
\begin{theorem}[Stability of closed-loop system]
If feasibility of~\eqref{eq:smpc problem} at $k=0$ is given, then the closed-loop system~\eqref{eq:smpc problem} under the proposed controller is input-to-state stable with the ISS-Lyapunov function
\begin{equation*}
V( \textbf{z}_k^* ) = \mathds{E} \Big\{ \sum_{i=0}^{N_p-1}   \big( \| \textbf{z}_{k+i|k}^* \|^2_Q + \| \textbf{u}_{k+i|k}^* \|^2_R \big)
+  \| \textbf{z}_{k+N_p|k}^* \|^2_P\Big\}.
\end{equation*}
\end{theorem}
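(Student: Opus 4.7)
The plan is to verify the two conditions of Definition~\ref{def:ISS} for the value function $V$, with the disturbance realisation $\textbf{w}_k$ playing the role of the ISS input. Recursive feasibility, proved in the preceding theorem, guarantees that $V(\textbf{z}_k^{*})$ is well defined for every $k\ge 0$, so the candidate-solution arguments below are legitimate.

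For the sandwich bounds I would argue as follows. The lower bound is immediate because $\textbf{z}_{k|k}^{*}=\textbf{z}_k$ is deterministic and contributes $\|\textbf{z}_k\|_{Q}^{2}$ to the expected cost, yielding $\alpha_1(s):=\lambda_{\min}(Q)\,s^{2}$. For the upper bound I would construct a locally admissible control sequence, e.g., the LQR feedback $\textbf{u}=K\textbf{z}$ on a neighbourhood of the origin contained in $\mathcal{S}_f$, and use Schur-stability of $A_{cl}$ together with~\eqref{eq:riccati equation} to bound the resulting expected cost by a quadratic in $\|\textbf{z}\|$, obtaining $\alpha_2(s)=c_2\,s^{2}$. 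For the descent, I would apply the classical shifted-candidate argument: given the optimal sequence $\{\textbf{v}_{k+i|k}^{*}\}_{i=0}^{N_p-1}$, define the candidate at time $k+1$ by $\textbf{v}_{k+1+i|k+1}^{c}=\textbf{v}_{k+1+i|k}^{*}$ for $i=0,\ldots,N_p-2$ and append $\textbf{v}_{k+N_p|k+1}^{c}=0$, so the terminal step reduces to pure LQR feedback. By optimality $V(\textbf{z}_{k+1}^{*})\le V^{c}(\textbf{z}_{k+1})$, and expanding telescopically, together with the Riccati identity~\eqref{eq:riccati equation} to absorb the terminal-cost change, yields a per-step decrease of the form $V(\textbf{z}_{k+1}^{*})-V(\textbf{z}_k^{*})\le -\|\textbf{z}_{k|k}^{*}\|_{Q}^{2}-\|\textbf{u}_{k|k}^{*}\|_{R}^{2}+\Delta(\textbf{w}_k)$, with $\Delta$ collecting the terms produced by propagating $G\textbf{w}_k$ through $A_{cl}$.

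The hard part will be controlling $\Delta(\textbf{w}_k)$ and guaranteeing admissibility of the candidate. Admissibility requires that the realised state $\textbf{z}_{k+1}=\textbf{s}_{k+1|k}^{*}+G\textbf{w}_k$ generate predicted states that still satisfy the tightened constraints~\eqref{eq:constraint set} and reach $\mathcal{S}_f$ at step $N_p$; this is precisely where the first-step constraint $\textbf{s}_{k+1|k}\in C_T^{\infty}\ominus G\mathcal{W}$ earns its keep, forcing $\textbf{z}_{k+1}\in C_T^{\infty}$ for every $\textbf{w}_k\in\mathcal{W}$ and so guaranteeing a feasible $N_p$-step extension (possibly different from the naive shift if $\mathcal{S}_f$ is not LQR-invariant). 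For the bound on $\Delta$, the cross terms linear in $\textbf{w}_k$ vanish under expectation because $\mathds{E}[\textbf{w}_k]=\textbf{0}$, while the surviving quadratic terms are uniformly bounded by Schur-stability of $A_{cl}$ and by $\|\textbf{w}_k\|_{\infty}\le\bar{\textbf{w}}$, so $\Delta(\textbf{w}_k)\le\gamma(\|\textbf{w}_k\|)$ for some $\mathcal{K}_{\infty}$ function of the form $c_{\gamma}\|\textbf{w}_k\|^{2}$. Together with $\alpha_3(s):=\lambda_{\min}(Q)\,s^{2}$ this completes the ISS property in the sense of Definition~\ref{def:ISS}.
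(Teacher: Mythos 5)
Your overall route is the same as the paper's: take the optimal value function as the ISS-Lyapunov candidate, build the time-$(k+1)$ candidate by shifting the optimal sequence and appending the LQR tail $\textbf{u}_{k+N_p|k}=K\textbf{z}_{k+N_p|k}$, invoke optimality, and absorb the terminal-cost change with the Riccati identity $A_{cl}^TPA_{cl}+K^TRK+Q=P$ from~\eqref{eq:riccati equation}. You are in fact more careful than the paper on two points it glosses over: the sandwich bounds of Definition~\ref{def:ISS} (the paper never verifies them) and the admissibility of the shifted candidate, which you correctly tie to the first-step constraint~\eqref{eq:smpc problem_recursive} and the preceding recursive-feasibility theorem.

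There is one concrete gap: you let the ISS input be the disturbance $\textbf{w}_k$ alone and claim a per-step decrease of the form $-\|\textbf{z}_{k|k}^*\|_Q^2-\|\textbf{u}_{k|k}^*\|_R^2+\Delta(\textbf{w}_k)$. But the deterministic prediction dynamics~\eqref{eq:deterministic state dynamics} carry the exogenous road signal through $B_{\varphi}\varphi_{k+i|k}$, so the appended terminal step propagates as $\textbf{z}_{k+N_p+1|k}=A_{cl}\textbf{z}_{k+N_p|k}+B_{\varphi}\varphi_{k+N_p|k}+G\textbf{w}_{k+N_p|k}$ and the extra terminal cost necessarily contains a $\|B_{\varphi}\varphi_{k+N_p|k}\|_P^2$ contribution that does not vanish and cannot be folded into a function of $\|\textbf{w}_k\|$. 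This is exactly the term the paper retains on the right-hand side of~\eqref{eq:ISS}; without it your descent inequality is false whenever $\varphi\neq 0$, and the correct conclusion is ISS with respect to the pair $(\varphi,\textbf{w})$, leading to the mean-square bound~\eqref{eq:mean square stability} rather than convergence to the origin. Relatedly, the cross term between $A_{cl}\textbf{z}_{k+N_p|k}$ and $B_{\varphi}\varphi_{k+N_p|k}$ does not vanish under expectation (unlike the $\textbf{w}$ cross terms you correctly kill with $\mathds{E}[\textbf{w}_k]=\textbf{0}$), so a Young-type inequality is needed there, at the cost of constants in the gain functions.
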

\begin{proof}
Let $V( \textbf{z}_{k}^* )$ and $V( \textbf{z}_{k+1}^* )$ be an ISS-Lyapunov candidate function at time $k$ and $k+1$, respectively. With the stabilizing control input after prediction horizon $\textbf{u}_{k+N|k}= K \textbf{z}_{k+N|k}$, we have
\begin{equation}
\begin{split}
&\mathds{E}\{ V( \textbf{z}_{k+1}^* )\} - V( \textbf{z}_k^* ) \\
&~ = \mathds{E} \Big\{  \sum_{i=1}^{N_p-1} \big(
\| \textbf{z}_{k+i|k}^* \|^2_Q
+ \| \textbf{u}_{k+i|k}^* \|^2_R
\big)
+ \| \textbf{z}_{k+N_p|k}^* \|^2_Q \\
&~~~~ +  \| \textbf{u}_{k+N_p|k}^* \|^2_R
+ \| \textbf{z}_{k+N_p+1|k}^* \|^2_P \Big\}
- V( \textbf{z}_k^* )\\
&~ \leq \mathds{E} \Big\{ \| \textbf{z}_{k+N_p|k}^* \|^2_Q
+ \|  \textbf{z}_{k+N_p|k}^* \|^2_{K^T R K}
+ \| \textbf{z}_{k+N_p|k}^* \|^2_{A_{cl}^T P A_{cl}} \\
&~~~~ + \| B_{\varphi} \varphi_{k+N_p|k} \|^2_{P}
+  \| G \textbf{w}_{k+N_p|k} \|^2_{P}
- \| \textbf{z}_{k|k}^* \|^2_Q
- \| \textbf{u}_{k|k}^* \|^2_R \\
&~~~~ - \| \textbf{z}_{k+N_p|k}^* \|^2_P \Big\}\\
&~ = \mathds{E} \Big\{-\| \textbf{z}_{k|k}^* \|^2_Q
- \| \textbf{u}_{k|k}^* \|^2_R
+ \| B_{\varphi} \varphi_{k+N_p|k} \|^2_{P} \\
&~~~~+  \| G \textbf{w}_{k+N_p|k} \|^2_{P} \Big\} \\
&~ \leq - \| \textbf{z}_{k|k}^* \|^2_Q
+ \| B_{\varphi} \varphi_{k+N_p|k} \|^2_{P}
+ \mathds{E} \Big\{ \| G \textbf{w}_{k+N_p|k} \|^2_{P} \Big\}
\end{split}
\label{eq:ISS}
\end{equation}
where $\textbf{s}_{k|k}^* = \textbf{z}_{k|k}^*$, and $A_{cl}^T P A_{cl} + K^T R K + Q = P$ since $P$ is the solution of~\eqref{eq:riccati equation}. Therefore, $V( \textbf{z}_k^* )$ is the ISS-Lyapunov function and the closed-loop system is input-to-state stable.
\end{proof}
Moreover, summing~\eqref{eq:ISS} over $k=0, 1, \dots$ leads to
\begin{equation}
\lim_{n \rightarrow \infty}\frac{1}{n} \sum_{k=0}^{n} \mathds{E} (  \| \textbf{z}_{k} \|^2_Q
+ \| \textbf{u}_{k} \|^2_R )
\leq
L_{ss}
\label{eq:mean square stability}
\end{equation}
where $L_{ss}=\lim_{n \rightarrow \infty} \sum_{k=0}^{n} \mathds{E} ( \| B_{\varphi} \varphi_k \|^2_{P}
+ \| G \textbf{w}_k \|^2_{P})/n$
by using discrete-time version of Dynkin's Formula~\cite{kushner1967stochastic}. It is straightforward that the state of the closed-loop system does not converge asymptotically to the origin but remains within a neighborhood of the origin due to the external signal and uncertainty by viewing~\eqref{eq:mean square stability}, which means mean-square stability~\cite{farina2016stochastic, kouvaritakis2010explicit}.

\section{Simulation Results}
\label{sec:results}

\subsection{Simulation Set-up and Koopman Operator-based Vehicle Modeling}
\begin{figure}[t]
\centering
\includegraphics[width=\columnwidth]{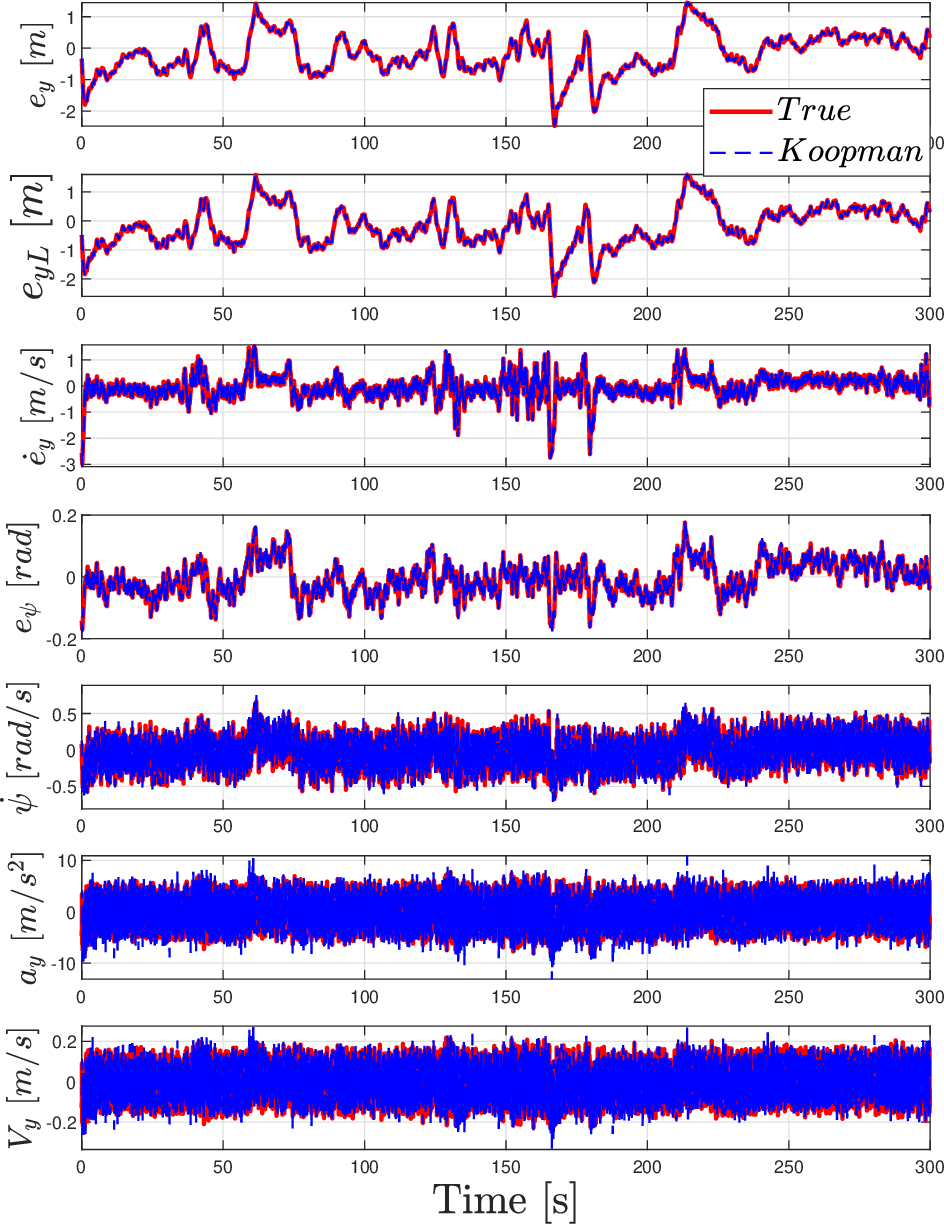}
\caption{ Model fitting accuracy of the Koopman model with validation set}
\label{fig:KoopmanModeling}
\end{figure}

The proposed method was validated using the co-simulation platform with MATLAB/Simulink and CarSim. The vehicle dynamic simulator, CarSim, provides a vehicle model with 27 degrees of freedom for representing the highly nonlinear vehicle dynamics allowing for testing of the realistic motion of a vehicle.
We used various roads provided by CarSim to obtain the training dataset for computing the Koopman operator with a sample time of $0.01s$. Some of the system states are related to the given road lane, i.e., $e_y$, $e_{yL}$, $\dot{e}_y$, and $e_{\psi}$; thus a path-follow controller stabilizing the vehicle lateral motion is needed to obtain the dataset.
Moreover, random signals are added to the input to sufficiently excite the nonlinear vehicle dynamics. For more details, please refer to our previous work~\cite{kim2022data, kim2023koopman}. Then, the dataset matrices~\eqref{eq:data collection} is obtained with $M = 1.22 \times 10^5$. We chose $N=22$ in~\eqref{eq:lifting function} to obtain the lifted state.
In addition, it is reported that a thin plate spline radial basis function is an effective lifting function in autonomous vehicle modeling compared to the other basis functions~\cite{kim2023koopman}. Thus, the nonlinear lifting functions $\phi_i$ are selected as the thin plate spline radial basis functions, i.e., $\phi_i(\textbf{x})=\|\textbf{x}-\textbf{c}_l \|_2^2 \cdot \text{log}\|\textbf{x}-\textbf{c}_l\|_2$ where $\textbf{c}_l$ is randomly selected with a uniform distribution in a certain range~\cite{korda2018linear}. The number of thin plate spline radial basis functions is set to $15$ in~\eqref{eq:lifted state}.

Based on the obtained training dataset, we approximate the Koopman operator in the finite-dimensional space using~\eqref{eq:minimization for AB}. The approximated Koopman operator is tested to validate the modeling accuracy with a validation dataset. The fitting performance is shown in Fig.~\ref{fig:KoopmanModeling}. The red line depicts the true state of the vehicle acquired from CarSim, and the blue line illustrates the predicted vehicle state by the Koopman operator-based vehicle model. As shown in Fig.~\ref{fig:KoopmanModeling}, the Koopman-based vehicle model can predict the vehicle state well.

\subsection{Comparative Study}
\begin{figure}[t]
\centering
\subfigure[$V_x$]{
\hspace{-6mm}
\includegraphics[width=1.15\columnwidth]{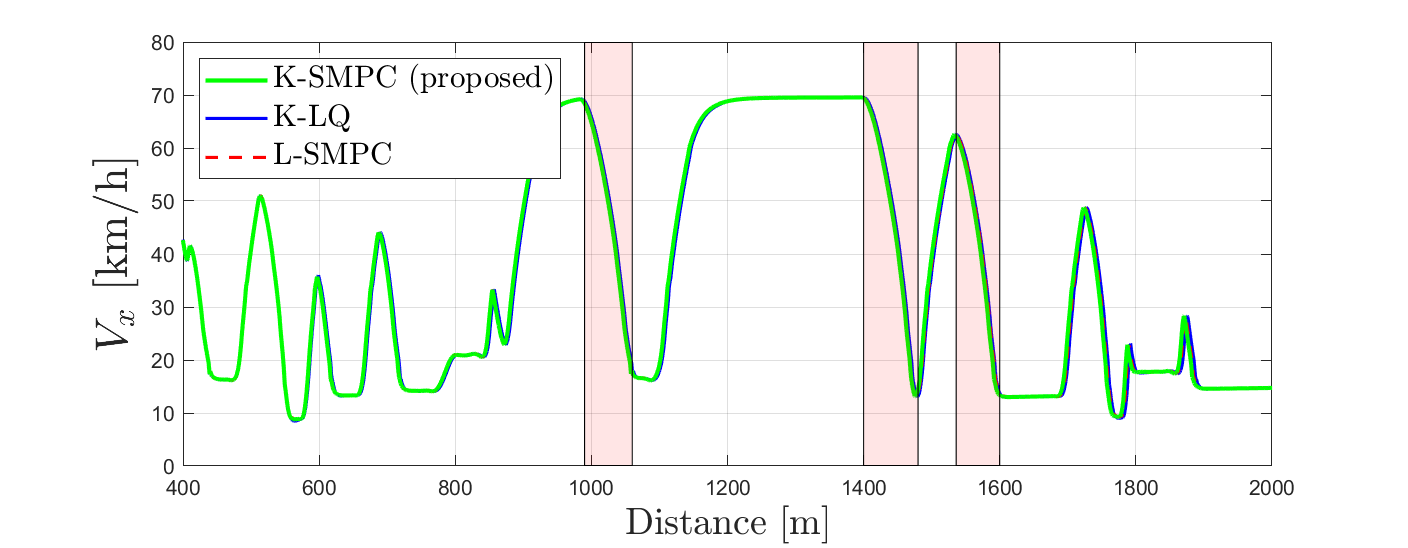}}
\subfigure[$C_0$, $C_1$, $C_2$, and $C_3$]{
\hspace{-6mm}
\includegraphics[width=1.15\columnwidth]{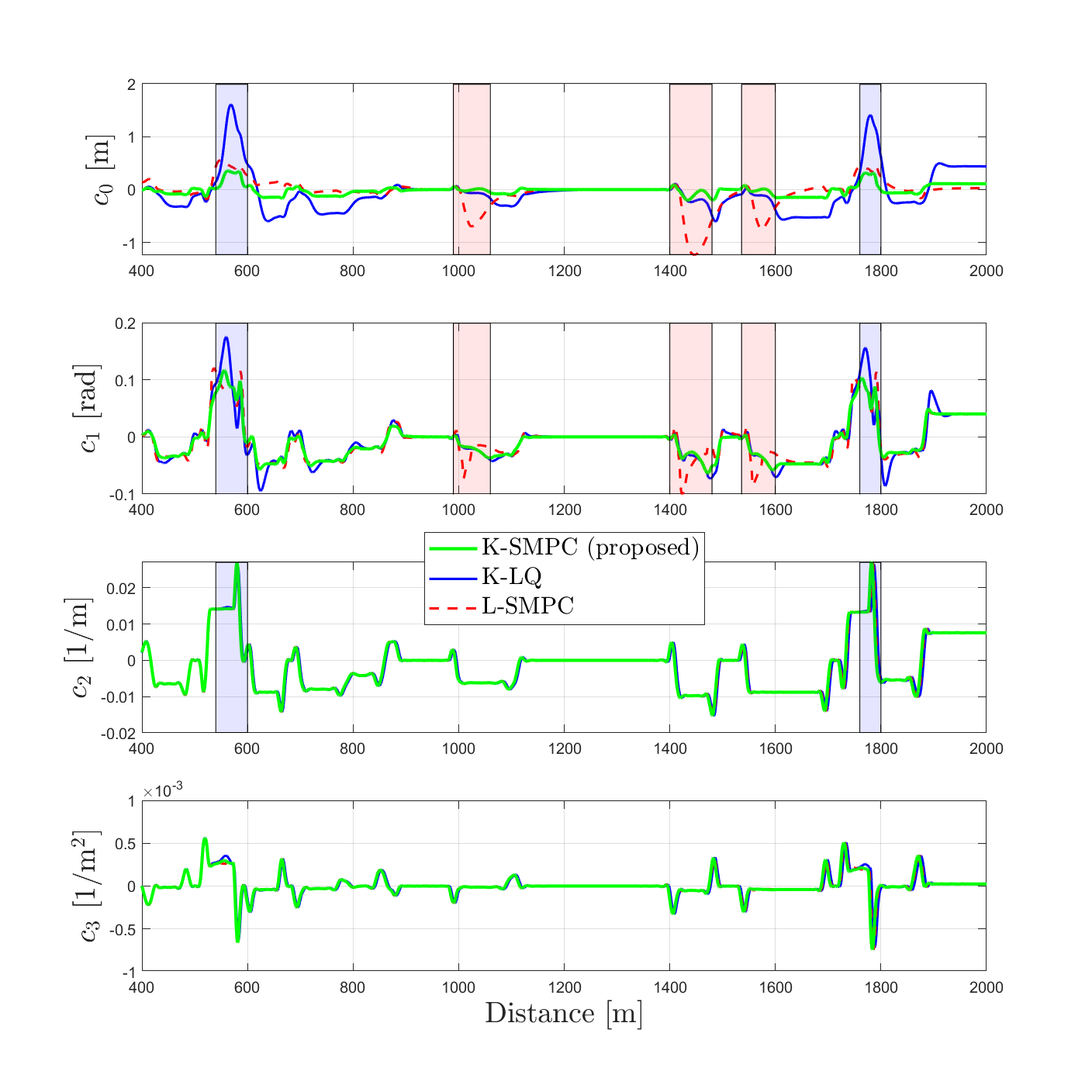}}
\caption{L-SMPC has large tracking error ($C_0$ and $C_1$) in pink-colored section, where vehicle speed is rapidly varying. K-LQ has large tracking error ($C_0$ and $C_1$) in blue-colored section, where road has high curvature: (a) Vehicle longitudinal speed, and (b) road coefficients.}
\label{fig:speed and cam}
\end{figure}
\begin{figure}[t!]
\centering
\includegraphics[width=0.95\columnwidth]{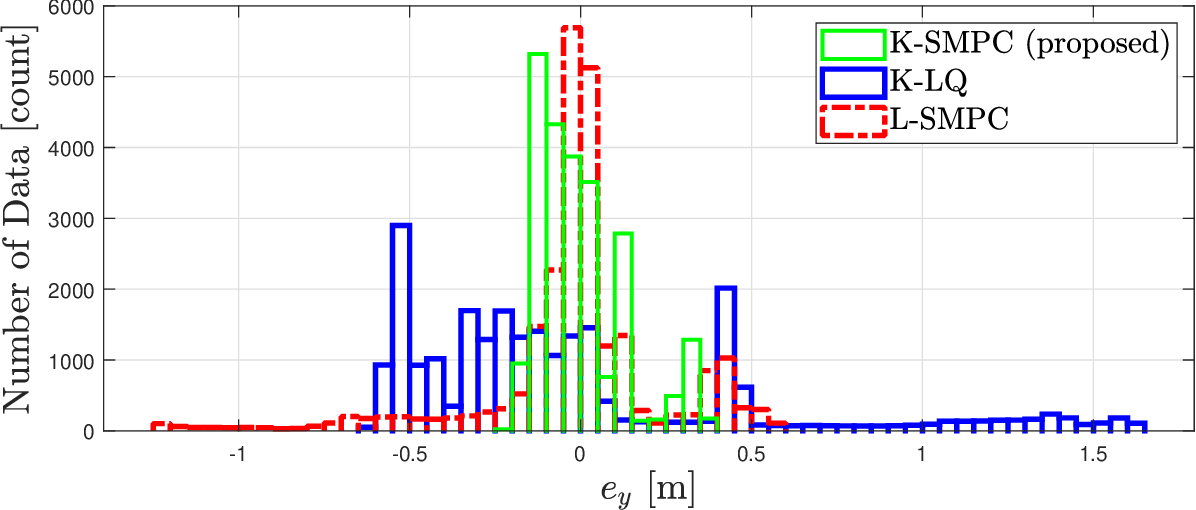}
\caption{$e_y$ histogram}
\label{fig:histogram ey}
\end{figure}
\begin{figure}[t!]
\centering
\subfigure[Lateral position error w.r.t. lane, $e_y$]{
\includegraphics[width=\columnwidth]{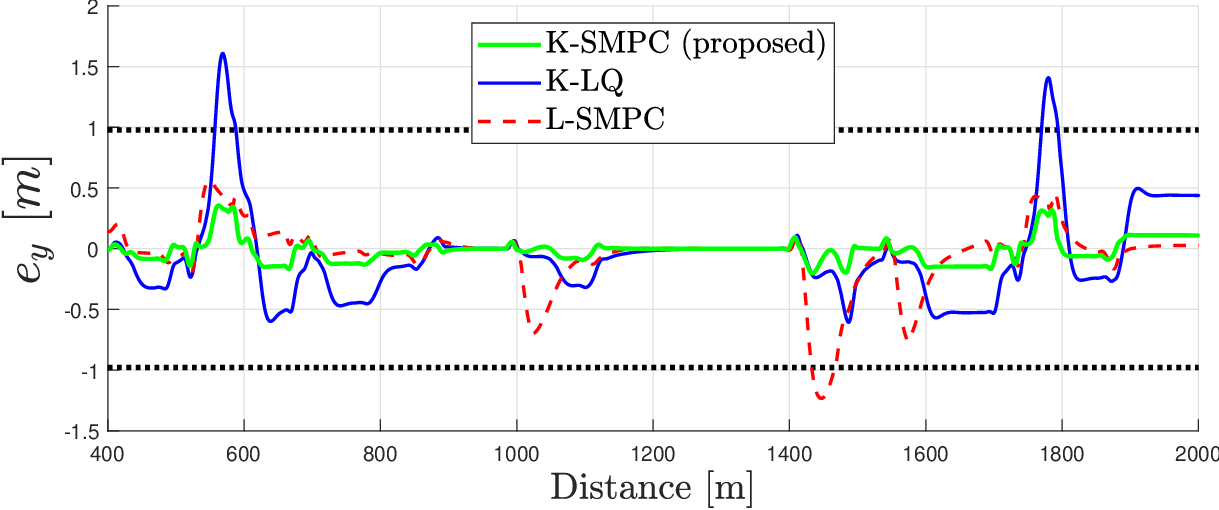}}
\subfigure[Lateral position error on the look-ahead point, $e_{yL}$]{
\includegraphics[width=\columnwidth]{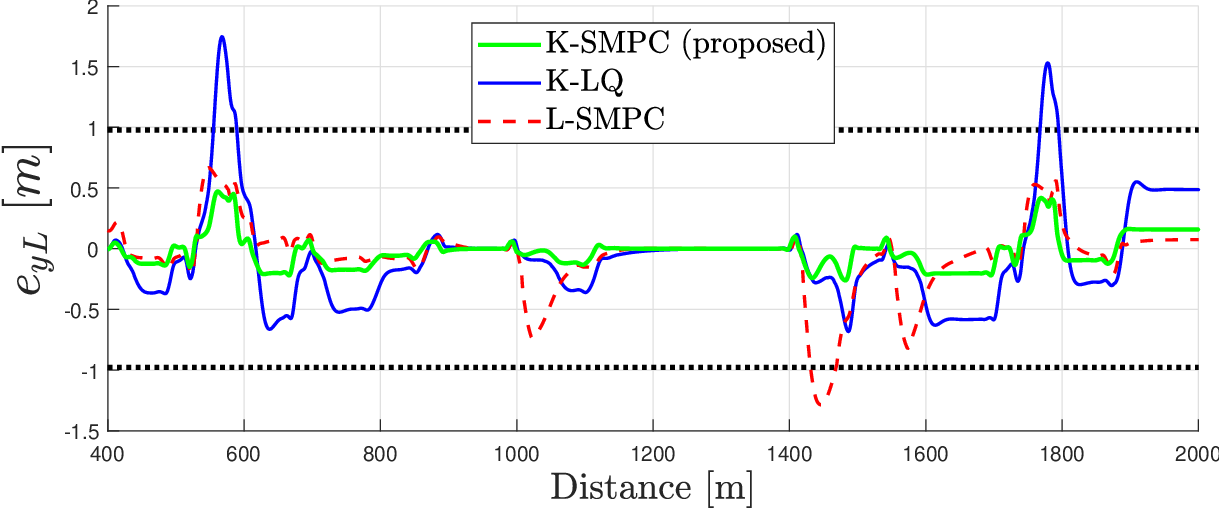}}
\subfigure[Derivative of the lateral position error, $\dot{e}_y$]{
\includegraphics[width=\columnwidth]{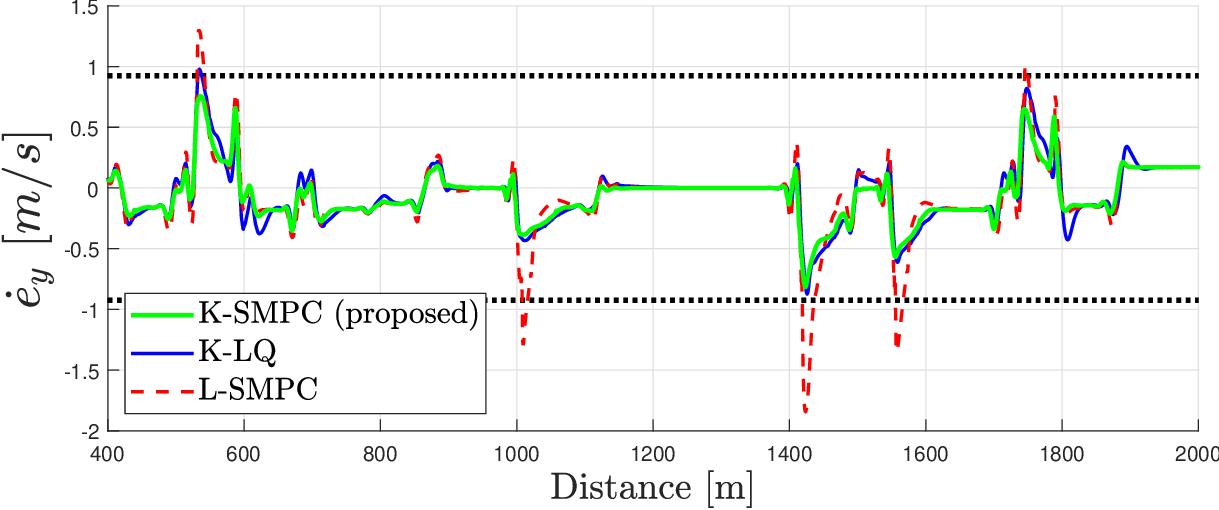}}
\subfigure[Heading angle error w.r.t. lane, $e_{\psi}$]{
\includegraphics[width=\columnwidth]{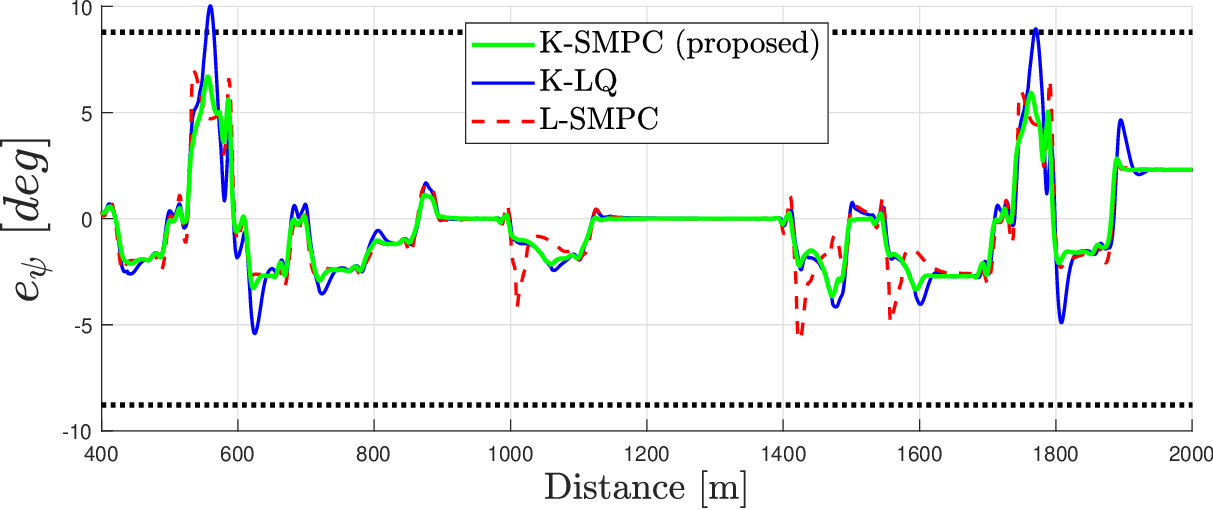}}
\subfigure[Yaw-rate of the vehicle, $\dot{\psi}$]{
\includegraphics[width=\columnwidth]{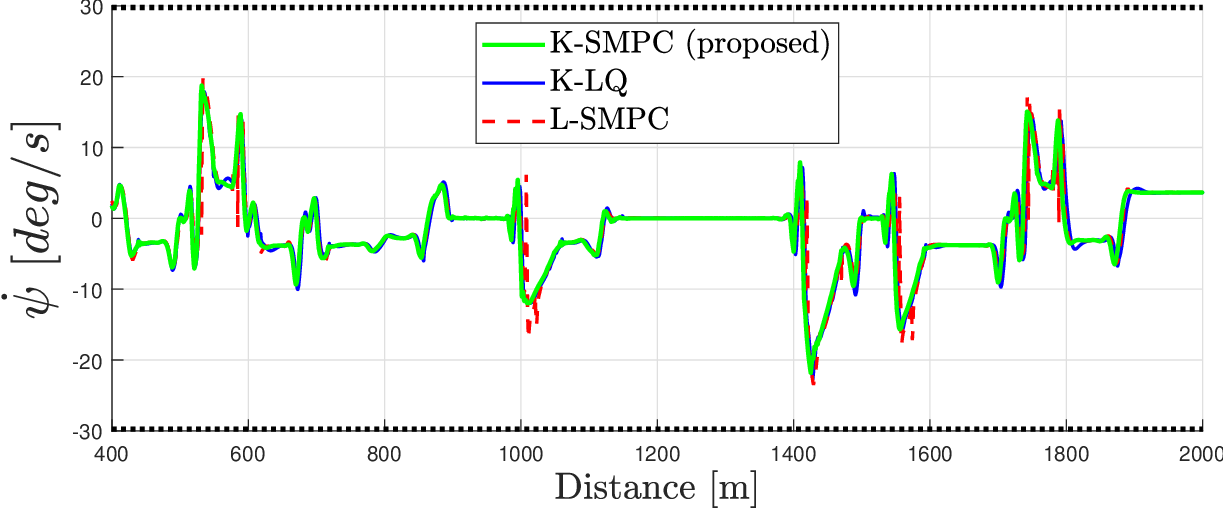}}
\caption{Control results of the system state}
\label{fig:state}
\end{figure}

In order to conduct a comparative study, we adopted two methods, the L-SMPC and the K-LQ. The details of each method are as follows:

\begin{itemize}
\item L-SMPC: The linear vehicle model was adopted as the look-ahead lateral dynamic model from~\cite{lee2016robust, quan2022linear} given by
    \begin{equation}
    \dot{\textbf{x}}_v = A_v{\textbf{x}}_v + B_v \textbf{u}_v + B_{v \varphi} \varphi_v,
    \end{equation}
    where
    \begin{gather*}
    A_v=\begin{bmatrix} 0&1&0&-L\\
                     0&a_{22}&a_{23}&a'_{24}\\
                     0&0&0&-1\\
                     0&a'_{42}&a_{43}&a_{44}\end{bmatrix},~
                     B_v=
                     \begin{bmatrix}
                     0\\b'_{21} \\0\\b_{41}
                     \end{bmatrix},\\
                     B_{v \varphi}=
                     \begin{bmatrix}
                     L&V_{x}\\V_{x}&0\\1&0\\0&0
                     \end{bmatrix},
    ~\textbf{u}_v = \delta ,~ \varphi_v =\begin{bmatrix} {\dot{\psi}}_{des} \\ e_{\psi L}-e_{\psi}\end{bmatrix},
    \end{gather*}
    %
    %
    with
    \begin{gather*}
    a_{22}=-\frac{2C_{\alpha f}+2C_{\alpha r}}{mV_{x}},~  a_{23}=-a_{22}V_{x},\\
    a_{24}=-1-\frac{2C_{\alpha f}l_{f}-2C_{\alpha r}l_{r}}{m{V_{x}}^2},~ a'_{24}=(a_{24}-1)V_{x},\\
    a_{42}=-\frac{2C_{\alpha f}l_{f}-2C_{\alpha r}l_{r}}{I_{z}},~ a'_{42}=a_{42}/V_{x},\\
    a_{43}=-a_{42},~ a_{44}=-\frac{2C_{\alpha f}{l_{f}}^{2}+2C_{\alpha r}{l_{r}}^{2}}{I_{z}V_{x}},\\
    b_{21}=\frac{2C_{\alpha f}}{mV_{x}},~ b'_{21}=b_{21}V_{x},~ b_{41}=\frac{2C_{\alpha f}l_{f}}{I_{z}}.
    \end{gather*}
    Then, the linear vehicle model was discretized with the zero-order-holder method. We designed the SMPC to be similar to~\eqref{eq:smpc problem} except for the system model~\eqref{eq:smpc problem_system}. The linear model-based SMPC for the LKS was successfully studied in~\cite{zhang2021stochastic}. However, the linear model is not appropriate since the cornering stiffness is no longer linear with respect to the tire slip angle when the road curvature is high and vehicle speed rapidly changes~\cite{zhang2021stochastic, rajamani2011vehicle}. Therefore, we can confirm the effectiveness of the proposed method in dynamic lane-keeping scenarios by comparing the result of the L-SMPC.

\item K-LQ~\cite{kim2023koopman}: The K-LQ method uses the Koopman operator-based vehicle model, the same as~\eqref{eq:lift_sys}. However, the linear quadratic regulator was adopted to control the system, i.e., the only difference with the proposed method is the control scheme. From~\cite{kim2023koopman}, the road information, i.e., $\varphi$, was not considered in the controller design. Thus, we can evaluate the tracking performance of the proposed scheme on high-curvature roads by comparing the performance of the K-LQ.
\end{itemize}

For fair comparison with the proposed method, we used the same weighting matrix on the system state and control input in the design of the controller of each method. Moreover, this paper designs the longitudinal controller to control the vehicle speed with respect to the road curvature with a proportional-derivative (PD) controller. The design process of the longitudinal controller is out of the scope of this paper; hence, the reader can refer to the authors' work~\cite{quan2022robust} for a detailed description. In Fig.~\ref{fig:speed and cam}~(a), it can be shown that the vehicle longitudinal speed is equal to each method. Thus, the tracking performance of each method only depends on each lateral controller.

\begin{figure}[t]
\hspace{-8mm}
\centering
\includegraphics[width=1.1\columnwidth]{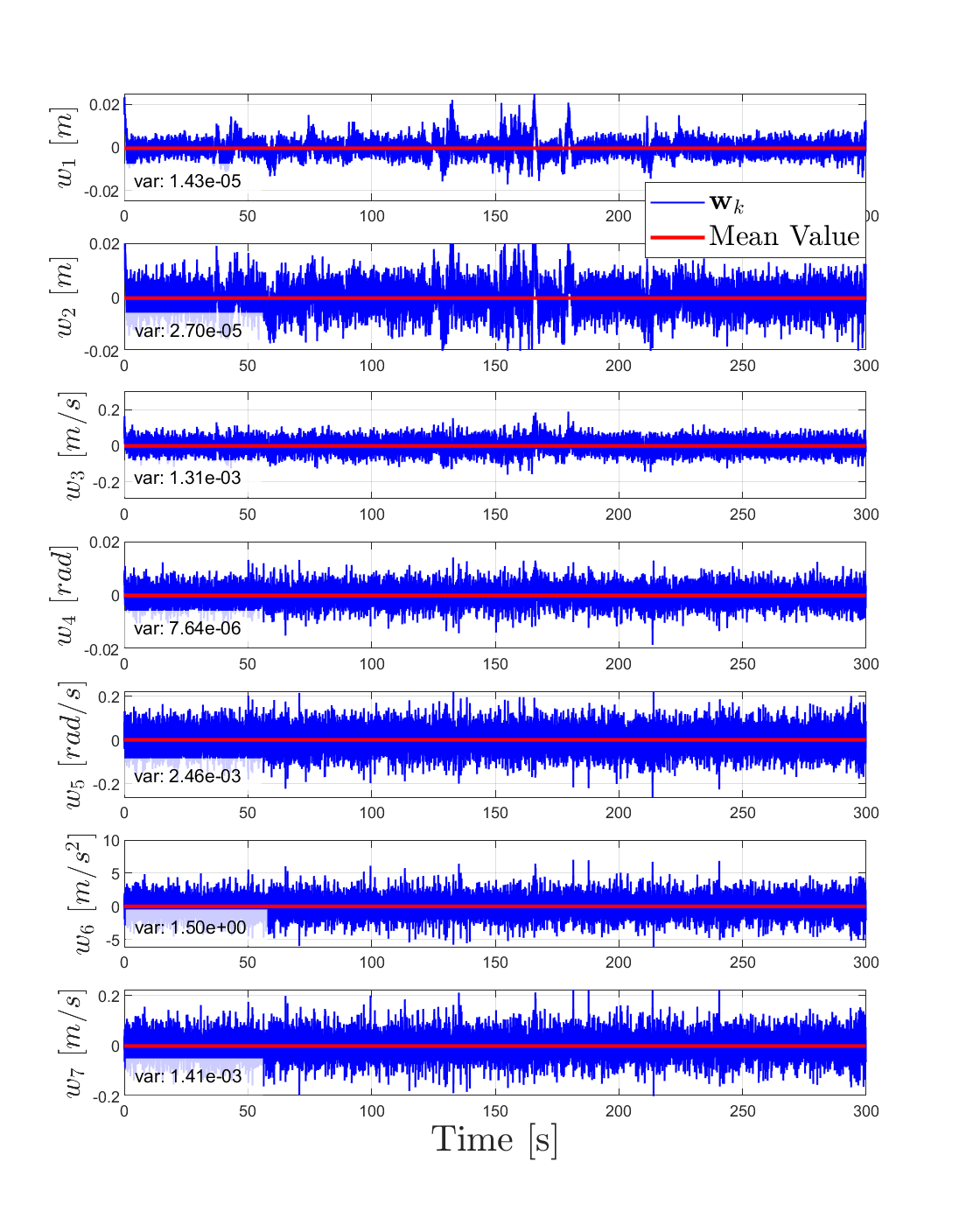}
\vspace{-10mm}
\caption{Uncertainty $\textbf{w}_k$ for each system state}
\label{fig:uncertainty}
\end{figure}
\begin{figure}[t]
\centering
\subfigure[Steering wheel angle]{
\includegraphics[width=\columnwidth]{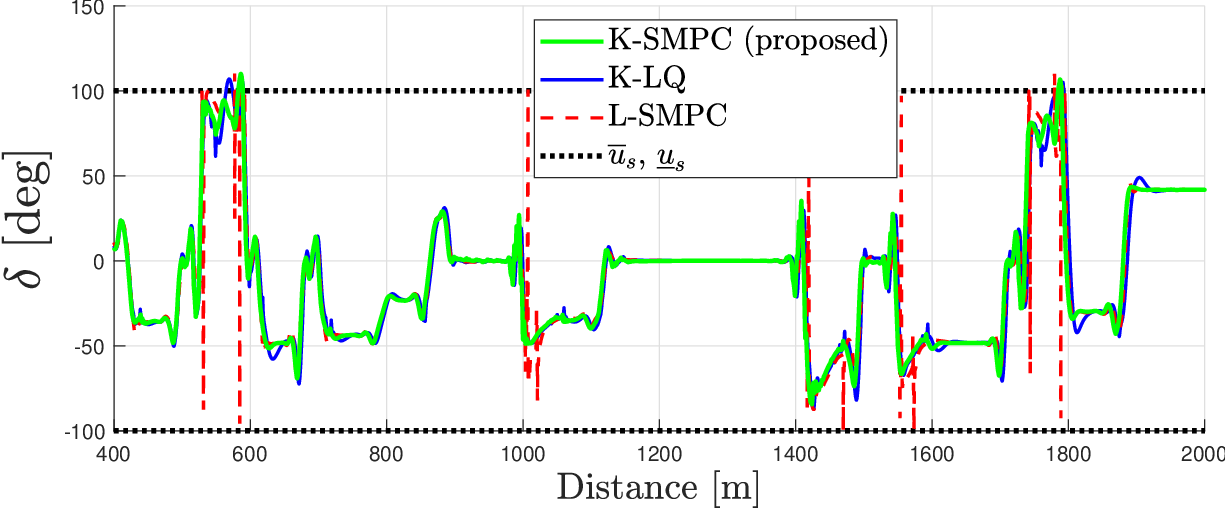}}
\subfigure[Constraint softening slack variables, $\overline{\sigma}$ and $\underline{\sigma}$]{
\includegraphics[width=\columnwidth]{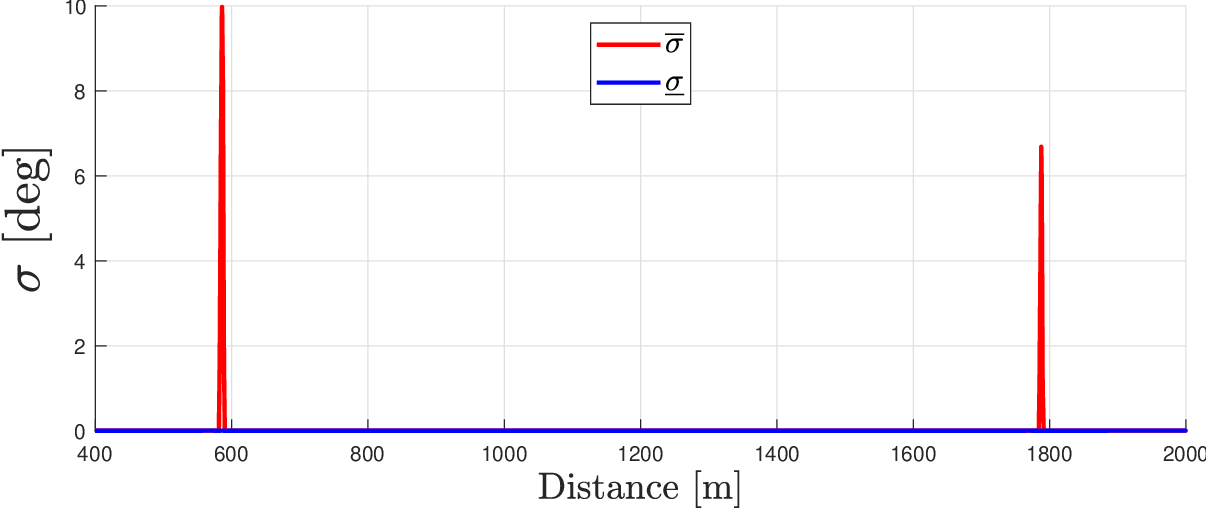}}
\caption{Results of control input and slack variables}
\label{fig:control input}
\end{figure}
\begin{figure}[t]
\centering
\includegraphics[width=\columnwidth]{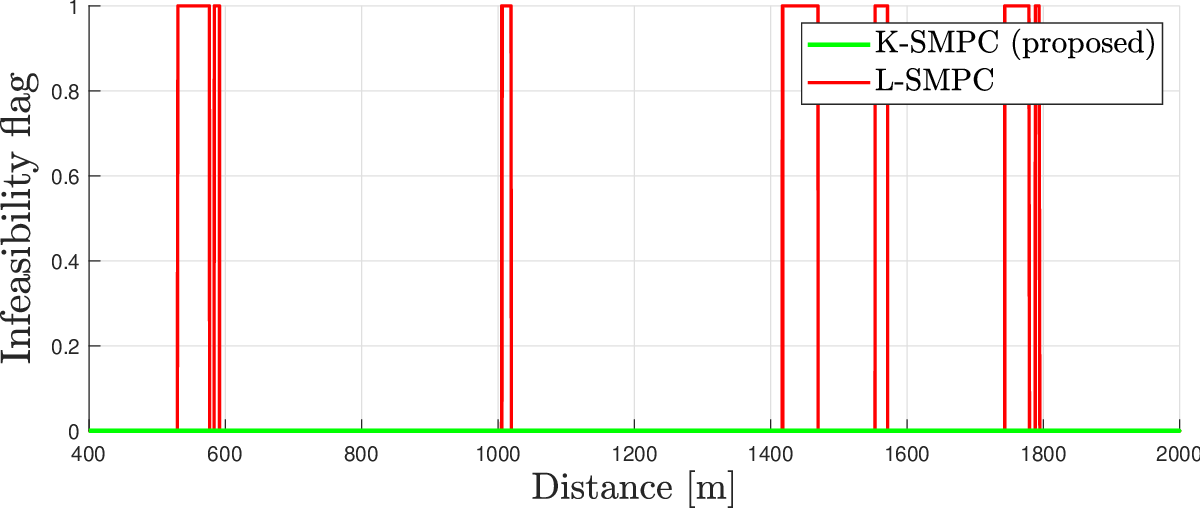}
\caption{SMPC infeasibility: 0 for feasible, 1 for infeasible}
\label{fig:mpc infeasible}
\end{figure}

We use the race-track road provided by CarSim to validate the tracking performance of each method. The road lane coefficients, i.e., $C_0$, $C_1$, $C_2$, and $C_3$ in~\eqref{eq:road model}, are illustrated in Fig.~\ref{fig:speed and cam}~(b). The blue line represents the result of the K-LQ, the red line is the L-SMPC, and the green line depicts the result of the proposed method.
It should be noted that $C_2$ is the curvature of the lane, representing the road shape. Hence, each control method was conducted on the same path.
As mentioned in~\ref{subsec:road lane}, $C_0$ denotes the lateral offset error, and $C_1$ denotes the heading angle error. We can see that the proposed method has a lower lateral position error and a lower heading angle error, i.e., $C_0$ and $C_1$, compared to other methods in Fig.~\ref{fig:speed and cam}~(b).
In particular, the proposed controller had a lower error on the lane with a high curvature, as shown by the blue section in Fig.~\ref{fig:speed and cam}~(b). However, the L-SMPC had a large error in the pink section in Fig.~\ref{fig:speed and cam}~(b) compared to other methods because the linear vehicle model is no longer accurate with rapid varying of vehicle speed~\cite{rajamani2011vehicle}, as illustrated by the pink section in Fig.~\ref{fig:speed and cam}~(a).
On the other hand, the K-LQ can track the given lane even with rapid speed changes because the Koopman operator-based vehicle model can represent the highly nonlinear vehicle dynamics.
However, the K-LQ has a larger error on roads with a high curvature, as depicted by the blue region in Fig.~\ref{fig:speed and cam}~(b). This is because the K-LQ does not consider the future system state, while the K-SMPC and the L-SMPC predict the future state with the curved road information from $\varphi$ in the optimization problem.
As mentioned before, $C_0$ denotes the lateral offset error, which is equal to the system state $e_y$. Thus, we observe that the proposed method has the lowest lateral error by the statistical way, i.e., histogram, as is shown in Fig.~\ref{fig:histogram ey}.

The results of the controlled system state are observed in Fig.~\ref{fig:state}. As defined by the state of the system in~\eqref{eq:state}, some states represent the path-tracking performance. Specifically, $e_y$ and $e_{yL}$ are the lateral position errors at CG and the look-ahead distance, respectively. In addition, $\dot{e}_y$ is the lateral speed tracking error, and $e_{\psi}$ is the heading angle tracking error.
In Fig.~\ref{fig:state}, the blue line represents the K-LQ, the red line represents the L-SMPC result, the green line represents the proposed method, and the black line represents the tightened constraints of each state. We set the constraints as $|e_y| \leq 1~m$, $|e_{yL}| \leq 1~m$, $|\dot{e}_y| \leq 0.95~m/s$, $|e_{psi}| \leq 10~deg$, and $|\dot{\psi}| \leq 30~deg/s$, respectively.
Besides, we set the covariance matrix $\Sigma_w$ from the data. The uncertainty $\textbf{w}_k$ is computed, as shown in Fig.~\ref{fig:uncertainty}. We calculate the variance for each state, then we use the variance for design the covariance matrix. The covariance matrix is defined as
$ \Sigma_w =
\text{diag}\begin{bmatrix} \sigma_1 & \sigma_2 & \sigma_3 & \sigma_4 & \sigma_5 & \sigma_6 & \sigma_7 & \textbf{0}_{1 \times (N-7) }\end{bmatrix}
$
where $\sigma_1=1.43e-5$, $\sigma_2=2.70e-5$, $\sigma_3=1.31e-3$, $\sigma_4=7.64e-6$, $\sigma_5=2.46e-3$, $\sigma_6=1.5$, and $\sigma_7=1.41e-3$.
%
%
%
%
\begin{table}[b]
\caption{Comparison of controller performance on validation road}
\begin{center}
\small
\begin{tabular}{cccccc}
\hline \hline
\multicolumn{6}{c}{Root Mean Squared Error}\\
\hline \hline
\vspace{1mm}
State & $e_y$  & $e_{yL}$ & $\dot{e}_y$  & $e_{\psi}$ & $\dot{\psi}$ \vspace{1mm} \\
\hline
\vspace{1mm}
K-LQ   & 0.517 & 0.571 & 0.252 & 0.055  & 0.088   \\
%
L-SMPC   & 0.266 & 0.299 & 0.271 & \textbf{0.043}  & 0.089 \\
K-SMPC & \textbf{0.130} & \textbf{0.180} & \textbf{0.220} & 0.045 & \textbf{0.088} \\
\hline
\end{tabular}
\label{table:RMSE_chap5}
\end{center}
\end{table}
\begin{table}[b]
\caption{Comparison of controller performance on validation road}
\begin{center}
\small
\begin{tabular}{cccccc}
\hline \hline
\multicolumn{6}{c}{Max Error}\\
\hline \hline
\vspace{1mm}
State & $e_y$  & $e_{yL}$ & $\dot{e}_y$  & $e_{\psi}$ & $\dot{\psi}$ \vspace{1mm} \\
\hline
\vspace{1mm}
K-LQ   & 1.608 & 1.746 & 0.978 & 0.175  & 0.396   \\
%
L-SMPC   & 1.233 & 1.285 & 1.845 & 0.121  & 0.410 \\
K-SMPC & \textbf{0.356} & \textbf{0.472} & \textbf{0.817} & \textbf{0.117} & \textbf{0.381} \\
\hline
\end{tabular}
\label{table:ME_chap5}
\end{center}
\end{table}
%
%
%
As shown in Fig.~\ref{fig:state}, it can observed that the proposed method has less error in the lateral position, lateral speed, and heading angle, i.e., $e_y$, $e_{yL}$, $\dot{e}_y$, and $e_{\psi}$. Moreover, the proposed controller satisfies the given constraints of each system state, while other methods violate the constraints in some sections.
%
%
As a result, the proposed method has better tracking performance than the other methods in the LKS application. We can observe the quantitative results in Table.~\ref{table:RMSE_chap5} and Table.~\ref{table:ME_chap5}. It can be confirmed that the lateral position error is dramatically reduced with the proposed method.

In Fig.~\ref{fig:control input}~(a), the control inputs of each method are depicted. The result of the K-LQ is the blue line, the L-SMPC method is the red line, the K-SMPC is the green line, and the first step input constraints are shown as the black line. It can be seen that the L-SMPC has a large oscillation in some ranges because the L-SMPC is infeasible where the given constraints are violated, as shown in Fig.~\ref{fig:mpc infeasible}.
On the other hand, the K-LQ and the K-SMPC method have a smooth control input. In addition, we can observe that the proposed method slightly violates the input constraints at about $600~m$ and $1800~m$ to control the vehicle on the high curvature road. However, note that we consider the soft constraints on the first step of the input. Thus, the slack variables can be observed, as shown in Fig.~\ref{fig:control input}~(b).
\begin{figure}[t!]
\centering
\subfigure[Tire slip angle of the K-LQ result]{
\hspace{-6mm}
\includegraphics[width=1.15\columnwidth]{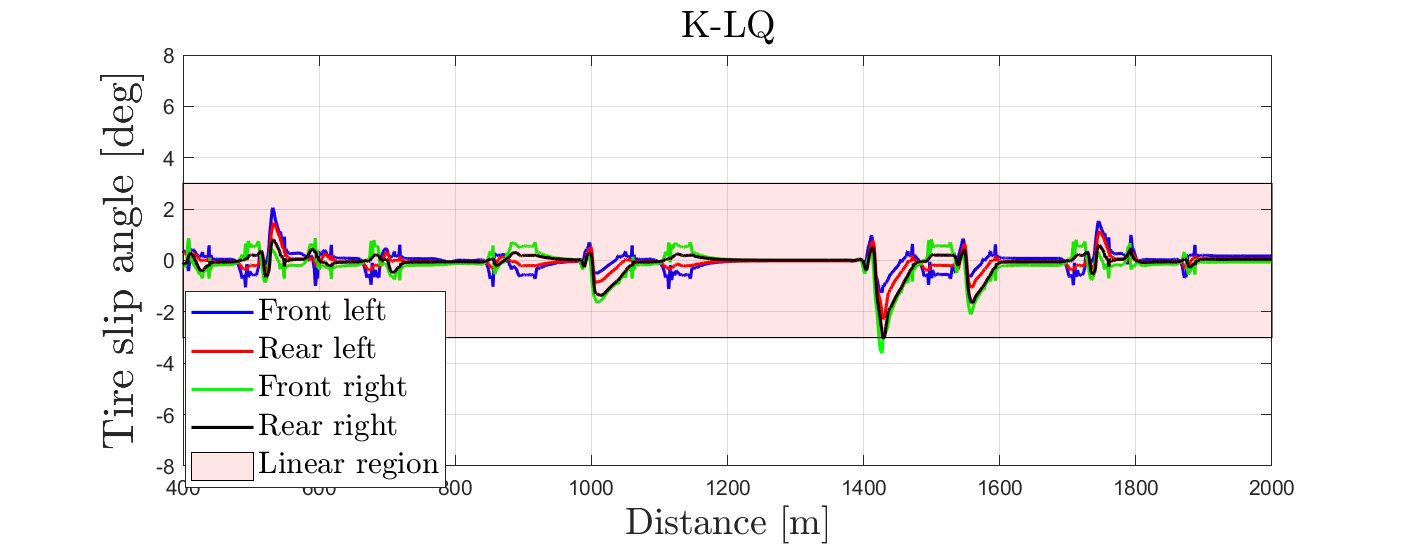}}
\hfill
\subfigure[Tire slip angle of the L-SMPC result]{
\hspace{-6mm}
\includegraphics[width=1.15\columnwidth]{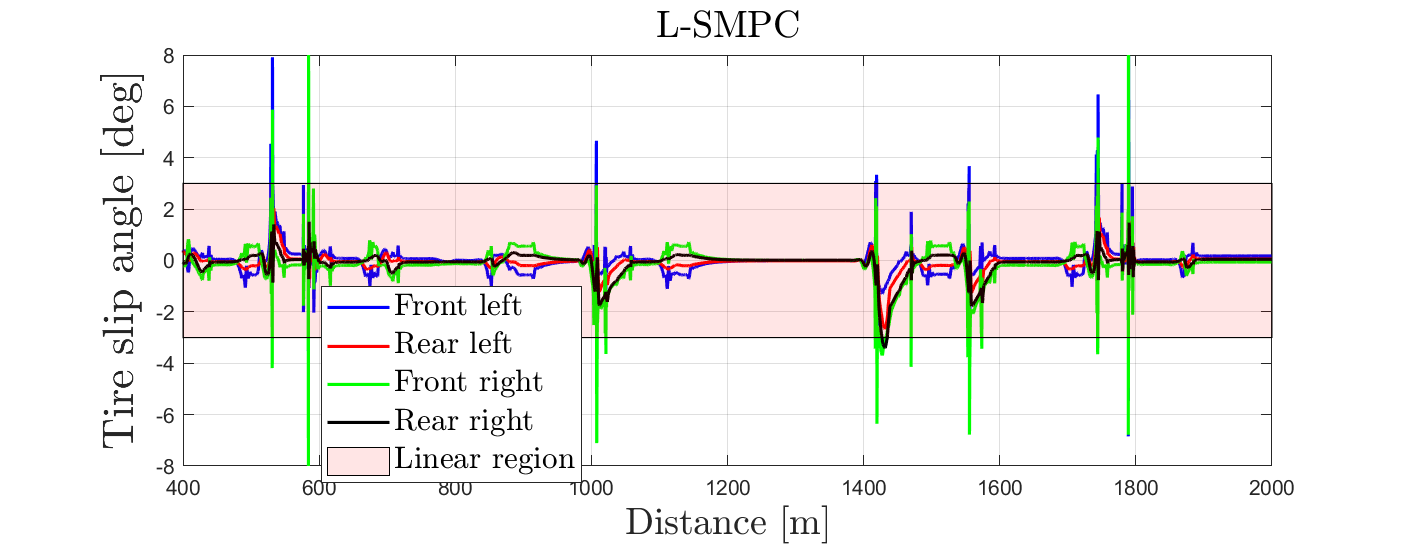}}
\hfill
\subfigure[Tire slip angle of the K-SMPC result]{
\hspace{-6mm}
\includegraphics[width=1.15\columnwidth]{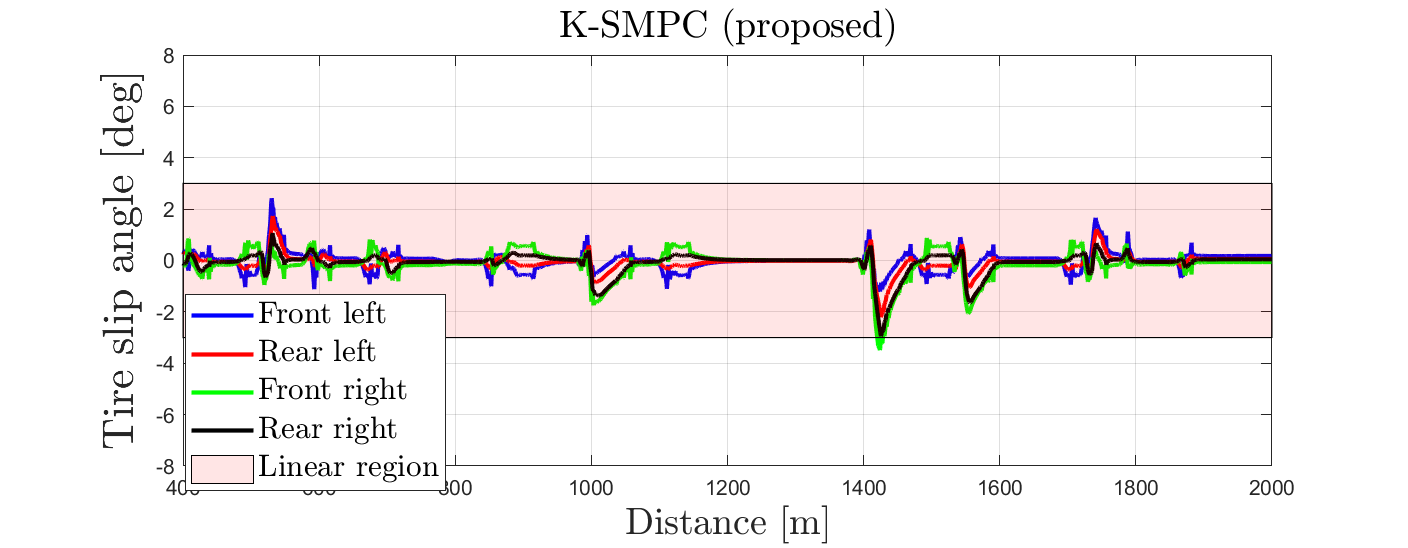}}
\hfill
\caption{Results of the tire slip angle}
\label{fig:slip}
\end{figure}

The results of each tire slip angle are shown in Fig.~\ref{fig:slip} for each method. The pink section of Fig.~\ref{fig:slip} represents the linear relationship between the lateral tire force and the tire slip angle~\cite{rajamani2011vehicle, kim2023koopman}, i.e., the cornering stiffness is a linear function of the tire slip angle in~\eqref{eq:tire force}. In this paper, the linear region is selected within $\pm 3~deg$ of the tire slip angle because the lateral tire force and the tire slip angle can be in a linear relationship provided by CarSim data and references~\cite{rajamani2011vehicle, kim2023koopman}.
The Koopman-based vehicle model (i.e., K-LQ and K-SMPC) maintains the tire slip in the linear region. Thus, it can be seen that the Koopman-based model captures the vehicle's nonlinear behavior and effectively controls the vehicle under dynamic situations. On the other hand, the L-SMPC method leaves the linear region so that the linear vehicle model is no longer valid.

\section{Conclusion}
\label{sec:conclusion}

In this paper, we proposed the K-SMPC for the enhanced LKS of autonomous vehicles. The EDMD method was used to approximate the Koopman operator in a finite-dimensional space for practical implementation.
We considered the modeling error of the approximated Koopman operator in the EDMD method. The modeling error was handled as a probabilistic signal. Then, we designed K-SMPC to tackle the modeling error.
The recursive feasibility of the proposed method was guaranteed with the explicit first-step state constraint by computing the robust control invariant set.
For the simulation, a high-fidelity vehicle simulator, CarSim, was used to validate the effectiveness of the K-SMPC. We conducted a comparative study between K-LQ and L-SMPC. From the results, it was confirmed that the proposed method outperforms other methods with respect to the tracking performance. Furthermore, we observed that the proposed method satisfies the given constraints and is recursively feasible.
As future work, a comparative study will be conducted with the Koopman-based RMPC to evaluate the conservativeness quantitatively. Future research may also include a real-car experiment with the proposed method.


\bibliographystyle{IEEEtran}
\bibliography{TIV_koopman_SMPC}

\begin{IEEEbiography}
[{\includegraphics[width=1in,height=1.25in,clip,keepaspectratio]{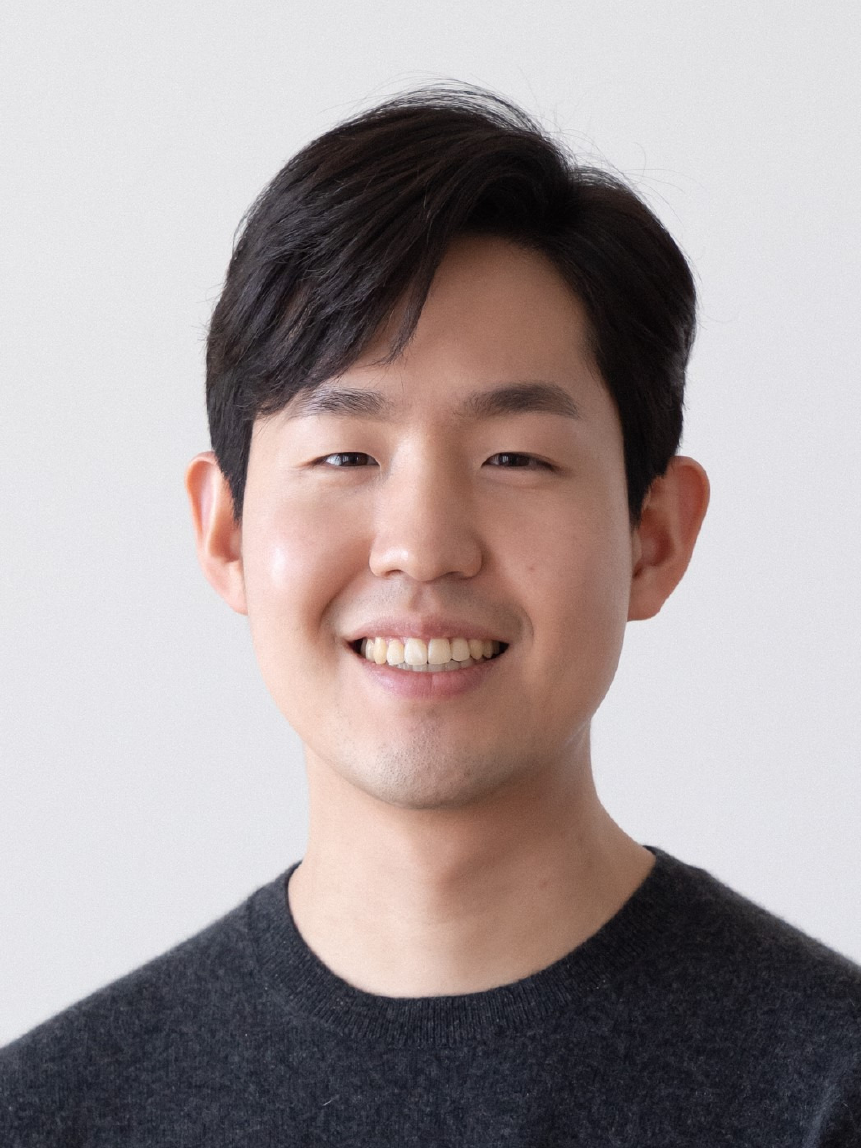}}]
{Jin Sung Kim}
(Graduate Student Member, IEEE) received his B.S. degree in electronic engineering from Kookmin University, Seoul, South Korea, in 2014, and M.S. degree in electrical engineering from Hanyang University, Seoul, in 2019, where he is currently pursuing his Ph.D. degree in electrical engineering. His main research interests include data-driven control, automated vehicles, optimal control, robust control, and artificial intelligence. Jin Sung Kim received the Outstanding Student Paper Award at International Conference on Control, Automation
and Systems in 2022. He is a member of the \textsc{IEEE Intelligent Transportation Systems Society (ITSS)}, \textsc{IEEE Control System Society (CSS)}, the  \textsc{Society of Automotive Engineers (SAE)}, the  \textsc{Korean Society of Automotive Engineers (KSAE)}, and the  \textsc{Institute of Control, Robotics and Systems (ICROS)}.
\end{IEEEbiography}

\begin{IEEEbiography}
[{\includegraphics[width=1in,height=1.25in,clip,keepaspectratio]
{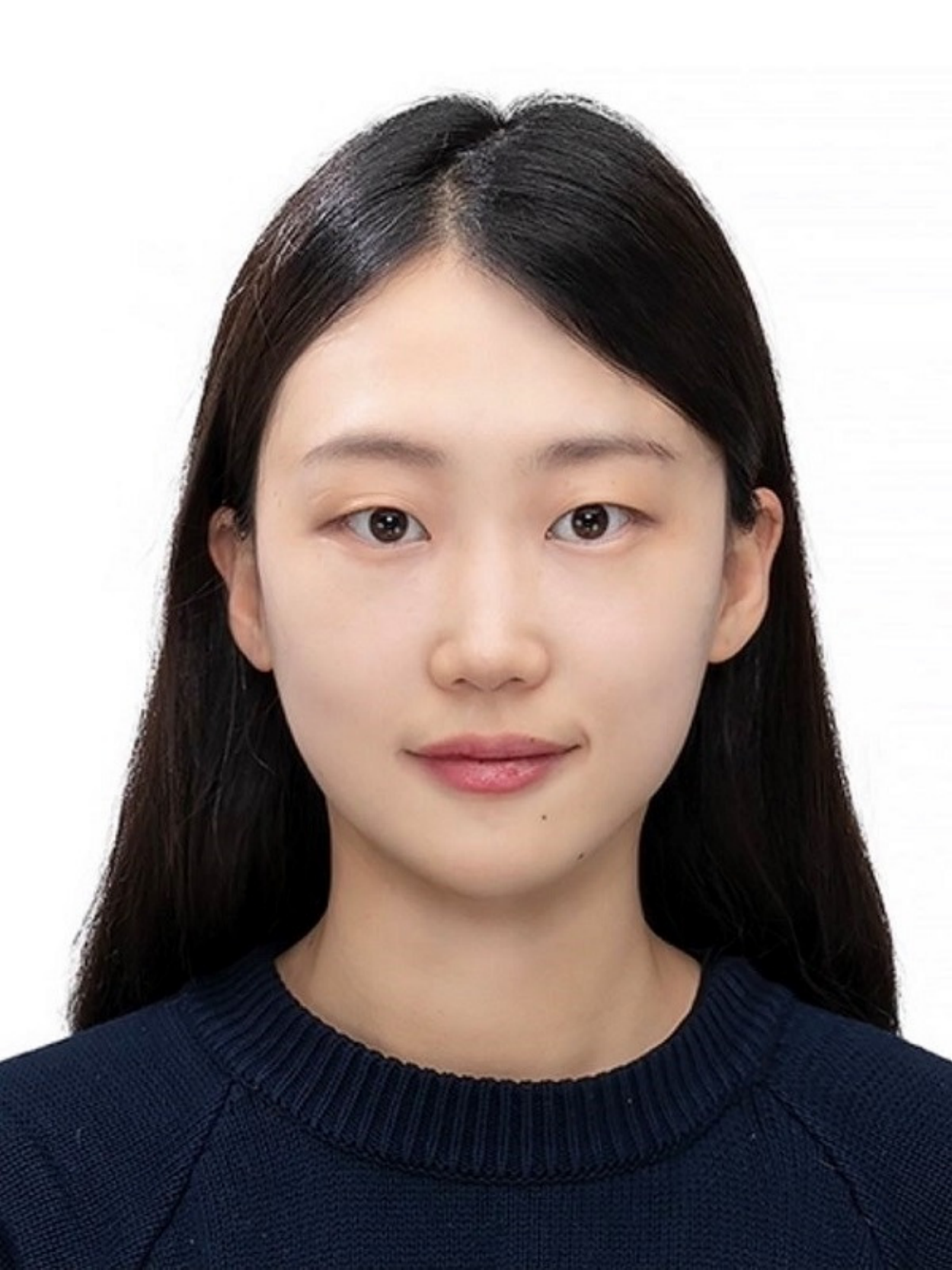}}]
{Ying Shuai Quan}
(Graduate Student Member, IEEE)
received a B.S. degree in control science and engineering in 2017 from the School of Astronautics, Harbin Institute of Technology, Harbin, China. She is working towards her Ph.D. in the Systems and Control Laboratory at the electrical engineering College of Engineering, Hanyang University, Seoul, Korea. Her main research interests include optimal control and learning-based control for autonomous systems. She is a student member of the \textsc{Korean Society of Automotive Engineers} (KSAE) and the \textsc{Institute of Control, Robotics, and Systems} (ICROS).
\end{IEEEbiography}

\begin{IEEEbiography}
[{\includegraphics[width=1in,height=1.25in,clip,keepaspectratio]{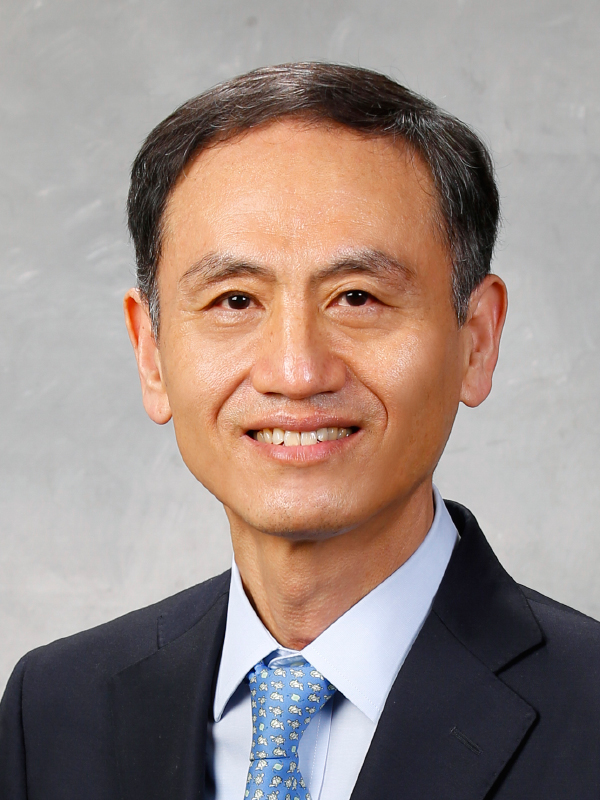}}]
{Chung Choo Chung}
received his B.S. and M.S. degrees in electrical engineering from Seoul National University, Seoul, South Korea, and his Ph.D. degree in electrical and computer engineering from the University of Southern California, Los Angeles, CA, USA, in 1993. From 1994 to 1997, he was with the Samsung Advanced Institute of Technology, South Korea. In 1997, he joined the Faculty of Hanyang University, Seoul, South Korea. Dr. Chung was an Associate Editor for the \textit{Asian Journal of Control} from 2000 to 2002 and an (Founding) Editor for the \textit{International Journal of Control, Automation and Systems} from 2003 to 2005. He served as associate editor for various international conferences, such as the \textsc{IEEE Conference on Decision and Control} (CDC), the American Control Conferences, the IEEE Intelligent Vehicles Symposium, and the Intelligent Transportation Systems Conference. He was a guest editor for a special issue on advanced servo control for emerging data storage systems published by the \textsc{IEEE Transactions on Control System Technologies} (TCST), 2012 and also a guest editor for the \textit{IEEE Intelligent Transportation Systems Magazine}, 2017. He is currently an associate editor for the \textit{IFAC Mechatronics}. He was a program co-chair of ICCAS-SICE 2009, Fukuoka, Japan, an organizing chair for the \textsc{International Conference on Control, Automation and Systems (ICCAS)} 2011, KINTEX, Korea and a program co-chair of the 2015 \textsc{IEEE Intelligent Vehicles Symposium}, COEX, Korea. He was the general chair of ICCAS 2019, Jeju, Korea, and the 2019 President of the Institute of Control, Robotics and Systems (ICROS), Korea. He was also a general chair of CDC 2020, held in Jeju, Korea. He is a member of the National Academy of Engineering of Korea (NAEK).
\end{IEEEbiography}

\end{document}